\setlist[itemize]{itemsep=-0.25ex,leftmargin=2.5ex}
\setlist[enumerate]{itemsep=-0.25ex,leftmargin=2.5ex}
\def\B#1{\boldsymbol #1}
\newtheorem{lemma}{Lemma}
\newtheorem{theorem}{Theorem}
\title{Bandit Learning for Diversified Interactive Recommendation}
\author{%
  Yong Liu \\
  Nanyang Technological University \\
  \texttt{stephenliu@ntu.edu.sg} \\
  \And
  Yingtai Xiao \\
  University of Science and Technology of China \\
  \texttt{gzxyt@mail.ustc.edu.cn} \\
  \AND
  Qiong Wu\\
  Nanyang Technological University \\
  \texttt{wu.qiong@ntu.edu.sg} \\
  \And
  Chunyan Miao \\
  Nanyang Technological University \\
  \texttt{ascymiao@ntu.edu.sg} \\
  \And
  Juyong Zhang\\
  University of Science and Technology of China \\
  \texttt{juyong@ustc.edu.cn}
  % Coauthor \\
  % Affiliation \\
  % Address \\
  % \texttt{email} \\
}
\begin{document}

\maketitle

\begin{abstract}
Interactive recommender systems that enable the interactions between users and the recommender system have attracted increasing research attentions. Previous methods mainly focus on optimizing recommendation accuracy. However, they usually ignore the diversity of the recommendation results, thus usually results in unsatisfying user experiences. In this paper, we propose a novel diversified recommendation model, named \underline{D}iversified \underline{C}ontextual \underline{C}ombinatorial \underline{B}andit (DC$^2$B), for interactive recommendation with users' implicit feedback. Specifically, DC$^2$B employs determinantal point process in the recommendation procedure to promote diversity of the recommendation results. To learn the model parameters, a Thompson sampling-type algorithm based on variational Bayesian inference is proposed. In addition, theoretical regret analysis is also provided to guarantee the performance of DC$^2$B. Extensive experiments on real datasets are performed to demonstrate the effectiveness of the proposed method. %in balancing the recommendation accuracy and diversity.
%, compared with state-of-the-art methods.
\end{abstract}

\section{Introduction}
Conventional recommender systems are usually developed in non-interactive manner and learn the user preferences from logged user behavior data. One main drawback of these recommender systems is that they cannot capture the changes of users' preferences in time. This requires the development of interactive recommender system that enables interactions between users and the recommender system~\cite{steck2015interactive}. In the literature, contextual bandit learning has been demonstrated to be a promising solution to interactive recommendation problems~\cite{li2010contextual,zhao2013interactive,tang2015personalized,wang2017factorization,qi2018bandit}. In these methods, the recommender system sequentially recommends a set of items to a user and adopts the user's immediate feedback to improve its recommendation policy.

In practice, users' implicit feedback (e.g., clicking history) are usually utilized to build recommender systems, because implicit feedback is user centric, and can be easily collected~\cite{shi2014collaborative,liu2014exploiting,liu2015boosting,liu2018dynamic}. However, the implicit feedback usually brings bias signals which make the recommendation problems much more challenging. This bias comes from the fact that the implicit feedback can only capture the positive user preferences (i.e., observed user-item interactions), and all negative user preferences are missing. Although the non-interaction between the user and an item is usually treated as negative user preference in previous research work~\cite{shi2014collaborative}, it does not explicitly indicate that the user dislikes the item, as non-interaction may also be caused by that the item has not been exposed to the user~\cite{liang2016modeling}.

In addition, previous interactive recommendation methods mainly focus on optimizing recommendation accuracy. They usually ignore other important properties of the recommendation results, for example the diversity of the recommended item set~\cite{kunaver2017diversity}. Therefore, the items in the recommendation lists generated by these approaches may usually be very similar with each other, and the recommendation results may only cover a small fraction of items. This usually leads to inferior user experiences, and thus reduces the commercial values of recommender systems.

In this paper, we propose a novel bandit learning framework for interactive recommender systems based on users' implicit feedback, which strives to achieve a good balance between accuracy and diversity in the recommendation results. To solve the bias problems caused by implicit feedback, we model the interactions between users and the recommender system from two perspectives: i) \textbf{Diversified Item Exposure}: the recommender system selects a set of relevant yet diverse items to expose to the user; ii) \textbf{User Engagements}: the user eventually engages with some of the exposed items (e.g., clicks on the items). Specifically, the determinantal point process (DPP)~\cite{kulesza2012determinantal} is employed to select a set of diverse items to expose to users, considering both the qualities of items and the diversity of the selected item set. The advantage of DPP is that it explicitly models the probability that an item set would be selected to show to the user, thus can help solve the bias problem caused by implicit feedback~\cite{liang2016modeling}. In addition, the contextual features of items are also utilized to model the observed user engagements on the recommended items.

To summarize, the major contributions made in this paper are as follows: (1) we propose a novel bandit learning method, i.e., \underline{D}iversified \underline{C}ontextual \underline{C}ombinatorial \underline{B}andit (DC$^{2}$B), to improve the recommendation diversity for interactive recommender systems; (2) we propose a variational Bayesian inference algorithm under the Thompson sampling framework to learn the model parameters; (3) we also provide theoretical regret analysis for the proposed DC$^{2}$B method; %(4) we bridge the connections between the proposed DC$^{2}$B method and existing diversified recommendation methods;
(4) we perform extensive experiments on real datasets to demonstrate the effectiveness of DC$^{2}$B in balancing the recommendation accuracy and diversity, especially on large and sparse datasets.

\section{Related Work}
\label{sec:relatedwork}
\textbf{Diversified Recommendation.} \noindent %Promoting the diversity of the recommendation results has received increasing research attentions~\cite{kunaver2017diversity}.
One major group of diversified recommendation methods are based on greedy heuristics. The pioneering work is maximal marginal relevance (MRR)~\cite{carbonell1998use}, which defines a marginal relevance to combine the relevance and diversity metrics, and creates a diversified ranking of items by choosing an item in each interaction such that it maximizes the marginal relevance. Other greedy heuristics methods vary in the definition of the marginal relevance, often in the form of a sub-modular objective function~\cite{qin2013promoting,sha2016framework}, which can be solved greedily with an approximation to the optimal solution.
%For example, \cite{qin2013promoting} defines the sub-modular function  based on an entropy regularizer that captures the diversity of a recommended item set, \cite{sha2016framework} defines a sub-modular function to combine relevance, coverage of user's interests, and the diversity between items. % For example, Qin and Zhu~\cite{qin2013promoting} proposed an entropy regularizer to capture the diversity of a recommended item set and developed a greedy algorithm to generate diversified item recommendations. Ashkan et al.~\cite{ashkan2015optimal} proposed another form of submodular objective function called diversity-weighted utility maximization, and maximized it by a greedy algorithm. Sha et al.~\cite{sha2016framework} proposed a submodular objective function to combine relevance, coverage of user's interests, and the diversity between items. Parambath et al.~\cite{puthiya2016coverage} proposed a new criterion to capture both relevance and diversity for recommendation. This criterion can be optimized using an efficient greedy algorithm.
Another group of methods are based on refinement heuristics, which usually re-rank a pre-ranked item list through post-processing actions~\cite{zhang2008avoiding,antikacioglu2017post}.
%For instance, \cite{antikacioglu2017post} develops a maximum-weight degree-constrained subgraph selection method to post-process the recommendations generated by collaborative filtering models, so as to increase recommendation diversity.
% For instance, \cite{ziegler2005improving} defines a similarity metric based on the taxonomy information to compute an intra-list similarity for determining the overall diversity of the recommendation list, and increases diversity by merging a dissimilarity rank with the original rank. \cite{zhang2008avoiding} proposes a trust-region based optimization method to maximize the diversity of recommendation list, while maintaining an acceptable level of matching quality. \cite{antikacioglu2017post} develops a maximum-weight degree-constrained subgraph selection method to post-process the recommendations generated by collaborative filtering models, so as to increase recommendation diversity.
From another perspective, \cite{cheng2017learning} formulates the diversified recommendation problem as a supervised learning task, and proposes a diversified collaborative filtering model to solve the optimization problems. Recently, DPP has been demonstrated to be effective in modeling diversity in various machine learning problems~\cite{kulesza2012determinantal}, and some recent work~\cite{chen2018fast,wilhelm2018practical,wu2019adversarial} employs DPP to improve recommendation diversity. Overall, these diversified recommendation methods are developed for non-interactive recommender systems.
%Lathia et al. designed and evaluated a hybrid mechanism to maximize the temporal recommendation diversity~\cite{lathia2010temporal}. Zhao et al. utilized the purchase interval information to increase the diversity of recommended items~\cite{zhao2012increasing}. Szpektor et al. developed a question recommender system, which employed a thematic sampling algorithm to enforce recommendations to match different topics within the user profile~\cite{szpektor2013relevance}.

\textbf{Interactive Recommendation.} \noindent Contextual bandit has been often used for building interactive recommender systems. These methods mainly focus on optimizing the recommendation accuracy. For instance, \citep{li2010contextual} proposes a contextual bandit algorithm, named LinUCB, which sequentially recommended articles to users based on the contextual information of users and articles. \cite{zhao2013interactive} combines probabilistic matrix factorization with Thompson sampling and upper confidence bound based bandit algorithms to interactively select items. %Similarly, \cite{kawale2015efficient} built the IRS by integrating Thompson sampling framework with online Bayesian PMF.
\cite{tang2015personalized} proposes a parameter-free bandit approach that uses online bootstrap to learn the online recommendation model. Recently, \cite{wang2017interactive} extends the LinUCB to incorporate users' social relationships into interactive recommender system. \cite{wang2017factorization} proposes a factorization-based bandit approach to solve the online interactive recommendation problem.
Moreover, in~\cite{qi2018bandit}, the Thompson sampling framework is employed to solve the bandit problems with implicit feedback, where the implicit feedback is modeled as a composition of user result examination and relevance judgement. There also exist some interactive recommender systems focus on promoting the recommendation diversity. For example, \cite{qin2014contextual} proposes a contextual combinatorial bandit framework, incorporating the entropy regularizer~\cite{qin2013promoting} to diversify the recommendation results.
%\cite{wang2017biucb} developed binary upper confidence bound (BiUCB) method to solve the cold-start and diversified recommendation problems. Differing from\cite{qin2014contextual,wang2017biucb}, this paper proposes a full Bayesian framework to promote the recommendation diversity for IRS.
Differing from~\cite{qin2014contextual}, DC$^2$B is a full Bayesian recommendation framework which is more effective in balancing the recommendation accuracy and diversity, especially on larger and sparser datasets.

%\textbf{Summary.} \noindent DC$^2$B has novelty in the following aspects: 1) DC2B provides a whole probabilistic framework to model users’ diversified preferences based on users’ implicit feedback. It provides an efficient way in balancing the recommendation accuracy and recommendation diversity; 2) The proposed variational inference algorithm is more efficient in capturing users’ preferences in the first few recommendation trials.

\section{Problem Formulation}
We employ contextual bandit to build the diversified interactive recommender system. The recommender system is treated as an agent, and each item is treated as an arm.
Let $\mathcal{A}=\{a_{i}\}_{i=1}^{N}$ denote the set of $N$ arms (i.e., items). We assume each arm $a_{i}$ has a contextual feature vector $\B{x}_i \in \mathbb{R}^{1 \times d}$ summarizing its side information, and denote the features of all arms by $\B{X} \in \mathbb{R}^{N \times d}$. At each trial, the recommender agent would firstly choose a subset of arms $\mathcal{S}$ from $\mathcal{A}$, considering the qualities of the arms and the diversity of selected arms. In the literature, $\mathcal{S}$ is usually called as a \emph{super arm}. Here, we empirically define the quality of an arm $a_{i}$ as follows:
\begin{equation}
  r_{i} = \exp(\B{\theta} \B{x}_{i}^{\top}),
\end{equation}
where $\B{\theta}$ is the bandit parameter that describes the user preferences. The diversity of the selected super arm $\mathcal{S}$ can be measured by the intra-list distance metric~\cite{zhang2008avoiding}. Once a diversified super arm $\mathcal{S}$ has been selected according to a policy $\pi$ and displayed to the user, the user's engagements on the displayed items (e.g., clicks on the items) are used as the rewards for the recommender agent to optimize its recommendation policy. Through the interactions with the user, the recommender agent aims to adjust its super arm selection strategy to maximize its cumulative reward over time.

\textbf{Diversified Item Exposure:} \noindent The DPP is an elegant probabilistic model with the ability to model diversity in various machine learning problems~\cite{kulesza2012determinantal}. In this work, we utilize DPP to model the selection probability of a relevant yet diverse super arm $\mathcal{S}$. Formally, a DPP $\mathcal{P}$ on the set of candidate arms $\mathcal{A}$ is a probability measure on $2^{\mathcal{A}}$, describing the probability for the set of all subsets of $\mathcal{A}$. If $\mathcal{P}$ assigns nonzero probability on the empty set $\emptyset$, there exists a real, positive semi-definite kernel matrix $\B{L} \in \mathbb{R}^{N \times N}$, such that the probability of the super arm $\mathcal{S}$ can be defined as follows:
\begin{align}
p(\mathcal{S}) = \frac{\det(\B{L}
_{[\mathcal{S}]})}{\det(\B{L}+\B{I})},
\end{align}
where $\B{I}$ is the identity matrix, $\B{L}_{[\mathcal{S}]}\equiv [\B{L}_{ij}]_{a_{i}, a_{j} \in \mathcal{S}}$ is the sub-matrix of $\B{L}$.
 %indexed by the arms in $\mathcal{S}$. %For example, if $\mathcal{S}=\{a_{2}, a_{4}\}$, then $\B{L}_{[\mathcal{S}]} = (\begin{matrix}\B{L}_{22} & \B{L}_{24}\\ \B{L}_{42} & \B{L}_{44}\end{matrix})$.
As revealed in~\cite{kulesza2012determinantal}, $\B{L}$ can be written as a Gram matrix, $\B{L}=\B{V}\B{V}^\top$, where the rows of $\B{V}$ are vectors representing the arms. Following previous studies~\cite{chen2018fast,wilhelm2018practical}, we empirically set $\B{V}_{i}= (r_{i})^{\alpha}\B{x}_{i}$, where $\alpha > 0$ is a parameter controlling the impacts of item qualities. %Then, the elements of $\B{L}$ are defined as $ \B{L}_{ij} = \big(r_{i}r_{j}\big)^{\alpha}\B{x}_{i}\B{x}_{j}^{\top}$.
More details about DPP can be found in~\cite{kulesza2012determinantal}.

\textbf{User Engagements: }\noindent The user's engagements on displayed items are expressed by her implicit feedback (e.g., clicks on the items), which is usually described by a set of binary variables. If the user engages in the arm $a_{i}$, we set $y_{i}$ to 1; otherwise, we set $y_{i}$ to 0. Once an arm $a_{i} \in \mathcal{S}$ has been displayed to the user, we assume the user's engagements on $a_{i}$ is only determined by its quality. Thus, the probability of the observed user engagement on $a_{i}$, i.e., $y_{i}=1$, can be defined as follows:
\begin{equation}
  p_{i}
  \triangleq \rho(\B{\theta} \B{x}_{i}^{\top})
  =\frac{\exp(\B{\theta} \B{x}_{i}^{\top})}{1+\exp(\B{\theta} \B{x}_{i}^{\top})}
  =\frac{r_{i}}{1+r_{i}}.
\end{equation}
This can be explained as that when an arm $a_{i}$ is offered to the user, the user engages in this arm or a \emph{virtual arm} $a_{0}$ with a relevance score 1. %Note that more complex probabilistic model, e.g., multinomial logit model~\cite{guadagni1983logit}, can also be adopted to model the user engagements.
Based on these assumptions, we can define the joint probability of observed user engagements $\mathcal{Y}=\{y_{i}|a_{i} \in \mathcal{S}\}$ as follows:
\begin{align}
  p(\mathcal{Y}, \mathcal{S}, \B{\theta})=p(\B{\theta})p(\mathcal{S}|\B{\theta})p(\mathcal{Y}|\mathcal{S}, \B{\theta})
  =p(\B{\theta})\frac{\det(\B{L}
_{[\mathcal{S}]})}{\det(\B{L}+\B{I})}\prod_{a_{i} \in \mathcal{S}}p_{i}^{y_{i}}(1-p_{i})^{1-y_{i}},
\end{align}
where $p(\B{\theta})$ is the prior assigned to bandit parameters. In addition, we assume $p(\B{\theta})$ follows a Gaussian distribution $\mathcal{N}(\B{m}, \B{\Sigma})$, and $\B{m}$, $\B{\Sigma}$ are bounded. This assumption is typically used in practice.

\begin{algorithm}[tb]
   \caption{Thompson sampling for DC$^2$B}
   \label{alg:TS}
\begin{algorithmic}
   \STATE Initialize $\B{m}=\B{0}$, $\B{\Sigma}=\lambda \B{I}$, and $\mathcal{R}=\emptyset$.
   \FOR{$t=0$ {\bfseries to} $T$}
   \STATE $\mathcal{A}_{t} \leftarrow \mathcal{A} - \mathcal{R}$, $\B{X}_{t}=\{\B{x}_i | a_{i}\in \mathcal{A}_t\}$
   \STATE Randomly sample $\widehat{\B{\theta}}\sim \mathcal{N}(\B{m},\B{\Sigma})$
   \STATE $\mathcal{S} \leftarrow \mathcal{O}(\widehat{\B{\theta}},\B{X}_t)$
   \STATE Play super arm $\mathcal{S}$ and observe the reward $\mathcal{Y}$
   \STATE Update $\B{\Sigma}$ and $\B{m}$ according to Eq.~\eqref{eq:sigmaupdate}, Eq.~\eqref{eq:meanupdate}, and Eq.~\eqref{eq:xiupdate}.
   \STATE $\mathcal{R} \leftarrow \mathcal{R} \cup \mathcal{S}$
   \ENDFOR
\end{algorithmic}
\end{algorithm}

\section{Parameter Inference}
\label{sec:algorithm}
Once a newly obtained observation $(\mathcal{S}, \mathcal{Y})$ is available, we employ variational Bayesian inference~\cite{blei2017variational} to develop a closed form approximation to the posterior of $\B{\theta}$. According to~\cite{blei2017variational}, the approximated posterior $q(\B{\theta})$ of $\B{\theta}$ can be expressed as $\log q^{\ast}(\B{\theta})=\mathbb{E}_{param\neq \B{\theta}}[\log p(\mathcal{Y}, \mathcal{S}, \B{\theta})] +\mbox{const}.$ Moreover, based on the knowledge in Linear Algebra, we have $\det(\B{L}_{[\mathcal{S}]})= \prod_{a_{i}\in \mathcal{S}} r_{i}^{2\alpha} \det(\B{X}_{[\mathcal{S}]}\B{X}_{[\mathcal{S}]}^{\top})$ and $\det(\B{L}+\B{I})=\exp(\mbox{tr}(\log(\B{L}+\B{I}))$. Then, we can have the following log-likelihood function:
\begin{align}
\label{eq:likelihood}
   \log p(\mathcal{Y}, \mathcal{S}|\B{\theta})
   = \sum_{a_{i}\in \mathcal{S}}(\varphi(y_{i}, p_{i})+ 2\alpha \log r_{i}) + \log \det(\B{X}_{[\mathcal{S}]}\B{X}_{[\mathcal{S}]}^{\top}) -\sum_{j=1}^{N}\log (1+r_{j}^{2\alpha} \B{x}_j\B{x}_j^{\top}),
\end{align}
where $\varphi(y_{i}, p_{i})=y_{i}\log p_{i} +(1-y_{i}) \log (1-p_{i})$. In Eq.~\eqref{eq:likelihood}, the likelihood function is a logistic function, which is not conjugate with the Gaussian priors on $\B{\theta}$. To address this issue, the following Gaussian lower bound on the logistic function is employed to approximate the likelihood~\cite{jaakkola1997variational}, $\rho(x) \geq \rho(\xi)e^{\frac{x-\xi}{2}-\lambda(\xi)(x^{2}-\xi^{2})}$, where $\lambda(\xi)=\frac{1}{2\xi}(\rho(\xi)-\frac{1}{2})$, and $\xi$ is an auxiliary variable needs to be adjusted to make the bound tight at $x=\pm\xi$. Moreover, by assuming $||\B{\theta}||_{2}\leq A$ and $||\B{x}_{j}||_{2} \leq B$, we have
$ -\log\big[1+\exp(2\alpha\B{\theta}\B{x}_{j}^{\top}) \B{x}_{j}\B{x}_{j}^{\top}\big] \geq  - \exp(2\alpha\B{\theta}\B{x}_{j}^{\top})\B{x}_{j}\B{x}_{j}^{\top} \geq -\exp (2\alpha AB)B^{2}$. As we assume $\B{m}$ and $\B{\Sigma}$ are bounded, it is reasonable to infer that $\B{\theta}$ is bounded. By normalizing $\B{x}_{j}$, we can make $\B{x}_{j}$ bounded. Then, we have the following lower bound of the log-likelihood function in Eq.~\eqref{eq:likelihood}:
\begin{align}
\small
\label{eq:approxilikelihood}
  \log p(\mathcal{Y}, \mathcal{S}|\B{\theta}) \geq
  \underbrace{\sum_{a_{i}\in \mathcal{S}}\big[(2y_{i}-1)\frac{\B{\theta}\B{x}_i^{\top}}{2}
    -\lambda(\xi_i)(\B{\theta}(\B{x}_i^{\top}  \B{x}_i)\B{\theta}^{\top})+2\alpha \B{\theta}\B{x}_i^{\top} + \phi(\xi_{i})\big]}_{\log h(\B{\theta}, \B{\xi})}+ \mbox{const.}
\end{align}
where $\phi(\xi_{i})=\log \rho(\xi_i) - \frac{\xi_{i}}{2}+\lambda(\xi_i)\xi_i^2$. The optimal variational distribution of $\B{\theta}$ is as follows: $\log q^{\ast}(\B{\theta}) \approx \mathbb{E}\big[\log h(\B{\theta}, \B{\xi})\big] + \mathbb{E}\big[\log p(\B{\theta})\big] + \mbox{const}.$
Due to model conjugacy, we can know that $q(\B{\theta})$ shall follow a Gaussian distribution $\mathcal{N}(\B{m}_{post}, \B{\Sigma}_{post})$, where the mean and variance are as follows:
\begin{align}
  \B{\Sigma}_{post}^{-1} &= \B{\Sigma}^{-1} +2\sum_{a_{i}\in \mathcal{S}}\lambda(\xi_i)\B{x}_i^{\top}\B{x}_i, \label{eq:sigmaupdate}\\
  \B{m}_{post} &= \B{\Sigma}_{post}\big[\B{\Sigma}^{-1}\B{m}+\sum_{a_{i}\in \mathcal{S}}(y_i+2\alpha -\frac{1}{2})\B{x}_i\big].\label{eq:meanupdate}
\end{align}
Since no prior has been assigned to $\xi_{i}$, the optimal value of $\xi_{i}$ can be derived by maximizing the expected log-likelihood function:$\ell(\xi_{i})=\mathbb{E}[\log p(\mathcal{Y}, \mathcal{S}|\B{\theta}, \xi_{i})]$. Taking the derivation of $\ell(\xi_{i})$ with respect to $\xi_{i}$ and setting it to zero, the optimal value of $\xi_{i}$ can be obtained as follows:
\begin{equation}
  \xi_i = \sqrt{\B{x}_i(\B{\Sigma}_{post}+\B{m}_{post}^{\top}\B{m}_{post})\B{x}_i^{\top}}.
  \label{eq:xiupdate}
\end{equation}

We employ Thompson sampling (TS) to update the model parameters by balancing exploration and exploitation. The details of the TS algorithm are summarized in Algorithm~\ref{alg:TS}. In standard TS method, it is required to sample from the true posterior of model parameter $\B{\theta}$. As the logistic likelihood function is not conjugate with the Gaussian prior, we propose to sample from the approximated poster distribution $q(\B{\theta})$. Once completing the sampling of $\B{\theta}$, the DPP kernel matrix $\B{L}$ is fixed, and we can select the optimal super arm $\mathcal{S}$ by maximizing $f_{\B{\theta}}(\mathcal{S})=\prod_{a_{i}\in \mathcal{S}}p_i \det(\B{L}_{[\mathcal{S}]})$. Following~\cite{chen2018fast}, we employ the fast gready MAP inference algorithm to obtain the optimal super arm. The details of the greedy algorithm are summarized in Algorithm~\ref{alg:dppgreedy}.

\begin{algorithm}[H]
   \caption{DPP Greedy Search $\mathcal{S} \leftarrow \mathcal{O}(\widehat{\B{\theta}},\B{X}_t)$}
   \label{alg:dppgreedy}
\begin{algorithmic}
   \STATE \textbf{Startup}: Construct $\B{L}, \B{p}$ according to $\widehat{\B{\theta}},\B{X}_t$
   \STATE \textbf{Initialize} $\mathbf { c } _ { i } = [ ]$ , $d _ { i } ^ { 2 } = \B { L } _ { i i }$ , $j = \arg \max _ { i \in Z } \log \left( d _ { i } ^ { 2 } \right) + \log (p_i)$ , $\mathcal{S} = \{ j \}$.
   \FOR{$k=0$ {\bfseries to} $K$}
   \FOR{$i \in Z \backslash \mathcal{S}$}
   \STATE $e _ { i } = \left( \mathbf { L } _ { j i } - \left\langle \mathbf { c } _ { j } , \mathbf { c } _ { i } \right\rangle \right) / d _ { j }$
   \STATE $\mathbf { c } _ { i } = \left[ \begin{array} { l l } { \mathbf { c } _ { i } } & { e _ { i } ] , d _ { i } ^ { 2 } = d _ { i } ^ { 2 } - e _ { i } ^ { 2 } } \end{array} \right.$
   \ENDFOR
   \STATE $j = \arg \max _ { i \in Z \backslash Y _ { \mathrm { g } } } \log \left( d _ { i } ^ { 2 } \right) + \log(p_i), \mathcal{S} = \mathcal{S} \cup \{ j \}$
   \ENDFOR
   \STATE \textbf{Return} $\mathcal{S}  $
\end{algorithmic}
\end{algorithm}

\section{Regret Analysis}
We consider a model involving a set of actions $\B{S}$ and a set of functions $\mathcal { F } = \left\{ f _ { \B{\theta} } : \B { S } \mapsto \mathbb { R } | \B{\theta} \in \B{\Theta} \right\}$ indexed by a random variable $\B{\theta}$ which belongs to an index set $\B{\Theta}$.
At each time t, a random subset $\B{S}_t \subseteq \B{S}$ is presented and an action $\mathcal{S}_t \in \B{S}_t$ is selected after which the reward $R_t$ is gained. We define the reward function as:  $\mathbb{E}[R_t] \triangleq f_{\B{\theta}}(\mathcal{S}_t)=\prod_{a_{i}\in \mathcal{S}_t}p_i \det(\B{L}_{[\mathcal{S}_t]})=\prod_{a_{i} \in \mathcal{S}_t }p_{i} r_{i}^{2\alpha}  \det ( \B{X}_{[\mathcal{S}_{t}]}\B{X}_{[\mathcal{S}_{t}]}^{T})$, and define the reward at trial $t$ as: $R_t = f_{\B{\theta}}(\mathcal{S}_t)+\epsilon_t$. Therefore, we have $\mathbb{E}[\epsilon_t] = 0$.
In addition we assume $\forall f_{\B {\theta}} \in \mathcal{F}, \forall \mathcal{S}_t \in \B{S}, f_{\B{\theta}}(\mathcal{S}_t) \in [0, C]$. For a combinatorial recommendation policy $\pi$, we can define the Bayesian risk bound as follows:
\begin{equation}
  Regret(T, \pi) = \sum_{t = 1 } ^ { T } \mathbb { E } \left[ \max _ { s\in \B { S }_t  } f _ { \B{\theta} } ( s ) - f _ { \B{\theta} } \left( \mathcal{S} _ { t } \right) \right].
\end{equation}

To perform the regret analysis, we first introduce the following two Lemmas.  %which can be proofed following the Proposition 9 and 10 in~\cite{russo2014learning}.
The detailed proofs can be found in the supplementary material.
%(Note that a constant d should be added in proposition in 10.)The difference is that the variable in our approach is a set of arms $\mathcal{S}_t$ instead of a single arm $a$, the proofs are similar so we omit them due to limited space. We set $\sigma=1$ according to Lemma 7 in ~\cite{qi2018bandit}, which also shows that $f_{\B{\theta}}$ satisfies Assumption 2 in ~\cite{russo2014learning}.
\begin{lemma}
For all $T \in \mathbb { N }$, $\alpha_0 > 0$ and $\delta \leq 1 / 2 T$,
\begin{align}
  Regret \left( T , \pi ^ { \mathrm { TS } } \right) \leq& 4 \sqrt { \operatorname { dim } _ { M } \left( \mathcal { F } , T ^ { - 1 } \right) \beta _ { T } ^ { * } ( \mathcal { F } , \alpha_0 , \delta ) T }
  +1 + \left[ \operatorname { dim } _ { M } \left( \mathcal { F } , T ^ { - 1 } \right) + 1 \right] C, \nonumber
\end{align}
where $ \operatorname { dim }_{ M } \left( \mathcal { F } , T ^ { - 1 } \right) $ is the $\epsilon$-dimension, $\beta _ { T } ^ { * } ( \mathcal { F } ,  \alpha_0, \delta  ) : = 8 \ln \left( N \left( \mathcal { F } , \alpha_0 , \| \cdot \| _ { \infty } \right) / \delta \right) + 2 \alpha t \left(\frac{15}{2} C +  \ln ( 2 t ^ { 2 } / \delta ) \right)$, and $N \left( \mathcal { F } , \alpha_0 , \| \cdot \| _ { \infty } \right)$ denotes the $\alpha_0$-covering number of $\mathcal{F}$.\nonumber
\end{lemma}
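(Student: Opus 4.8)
The plan is to follow the posterior-sampling regret framework built on history-measurable confidence sets and the (eluder-type) $\epsilon$-dimension $\dim_M$, treating the combinatorial super-arm selection as choosing a single action $\mathcal{S}_t$ from the random menu $\B{S}_t$. The argument hinges on constructing, for each round $t$, a confidence set $\mathcal{F}_t \subseteq \mathcal{F}$ that is measurable with respect to the history $H_{t-1}$ and contains the true payoff $f_{\B{\theta}}$ with high probability, and then decomposing the per-round regret through the induced upper confidence bound $U_t(s) = \max_{f \in \mathcal{F}_t} f(s)$ and its lower counterpart $L_t(s) = \min_{f \in \mathcal{F}_t} f(s)$.

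First I would write the per-round Bayesian regret, with $s_t^* := \arg\max_{s \in \B{S}_t} f_{\B{\theta}}(s)$, as
\[
\mathbb{E}\left[f_{\B{\theta}}(s_t^*) - f_{\B{\theta}}(\mathcal{S}_t)\right] = \mathbb{E}\left[f_{\B{\theta}}(s_t^*) - U_t(s_t^*)\right] + \mathbb{E}\left[U_t(s_t^*) - U_t(\mathcal{S}_t)\right] + \mathbb{E}\left[U_t(\mathcal{S}_t) - f_{\B{\theta}}(\mathcal{S}_t)\right].
\]
The central observation is the probability-matching property of Thompson sampling: conditional on $(\B{S}_t, H_{t-1})$ the sampled $\widehat{\B{\theta}}$ and the true $\B{\theta}$ are identically distributed, so $\mathcal{S}_t$ and $s_t^*$ are identically distributed; since $U_t$ is $H_{t-1}$-measurable, the middle term vanishes. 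The first term is nonpositive on the event $\{f_{\B{\theta}} \in \mathcal{F}_t\}$ and is bounded by $C$ otherwise, so choosing $\delta \le 1/2T$ makes its total contribution over the $T$ rounds at most an additive $C$ (the extra $+C$ inside $[\dim_M + 1]C$).

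The remaining term is controlled by the confidence width $w_t(\mathcal{S}_t) = U_t(\mathcal{S}_t) - L_t(\mathcal{S}_t)$, and the crux is to sum these widths. Here I would invoke the $\epsilon$-dimension: because a width at the chosen action can exceed a threshold $\epsilon$ only when $\mathcal{S}_t$ is $\epsilon$-independent of the past actions, and at most $\dim_M(\mathcal{F}, \epsilon)$ such independences can accumulate, the number of large-width rounds is bounded by a multiple of $\dim_M(\mathcal{F},\epsilon)\,\beta_T^*/\epsilon^2$. Thresholding at $\epsilon = T^{-1}$, the rounds with small width contribute at most $T\cdot T^{-1}=1$ (the standalone $+1$), the first large-width occurrences contribute $\dim_M(\mathcal{F}, T^{-1})\,C$, and a Cauchy--Schwarz step over the remaining rounds produces the leading $4\sqrt{\dim_M(\mathcal{F}, T^{-1})\,\beta_T^*\,T}$ term. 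Collecting the three pieces yields exactly the claimed bound.

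The main obstacle, and the genuinely problem-specific part, is establishing that the confidence sets with radius $\beta_T^*(\mathcal{F}, \alpha_0, \delta)$ contain $f_{\B{\theta}}$ with the required probability for the determinant-based payoff $f_{\B{\theta}}(\mathcal{S}) = \prod_{a_i \in \mathcal{S}} p_i \det(\B{L}_{[\mathcal{S}]})$, which is nonlinear and combinatorial rather than a scalar linear reward. I would discretize $\mathcal{F}$ at scale $\alpha_0$ and union bound over the cover, producing the $8\ln\!\left(N(\mathcal{F}, \alpha_0, \|\cdot\|_\infty)/\delta\right)$ factor; apply a self-normalized / martingale tail bound to the zero-mean noise sequence $\epsilon_t$; and use the boundedness $f_{\B{\theta}} \in [0,C]$ together with $\|\B{\theta}\|_2 \le A$ and $\|\B{x}_j\|_2 \le B$ to absorb the discretization residual into the second, $O(\alpha_0 t)$, correction term of $\beta_T^*$ (the $\tfrac{15}{2}C + \ln(2t^2/\delta)$ factor). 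Pinning down the explicit constants for this payoff, in particular the $\tfrac{15}{2}C$ coefficient, is the delicate step; the rest of the derivation is bookkeeping within the established posterior-sampling framework, modulo verifying the measurability conditions that justify the cancellation of the cross term over the random menus $\B{S}_t$.
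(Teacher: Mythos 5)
Your proposal follows essentially the same route as the paper: the paper's appendix likewise builds history-measurable confidence sets via an $\alpha_0$-cover, a martingale/Hoeffding concentration bound, and a discretization-error correction (its Lemmas 3--4), and then invokes the posterior-sampling machinery of Russo and Van Roy (its Propositions 1--3) --- the probability-matching cancellation, the width-counting argument via the $\epsilon$-dimension, and the Cauchy--Schwarz summation over widths --- to assemble exactly the three terms you describe. The constants you identify (the standalone $+1$ from thresholding at $T^{-1}$, the $[\operatorname{dim}_M+1]C$ from the failure event plus large-width rounds, the $8\ln(N(\mathcal{F},\alpha_0,\|\cdot\|_\infty)/\delta)$ from the union bound over the cover, and the $O(\alpha_0 t)$ discretization residual) match the paper's accounting.
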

\begin{lemma}
  Suppose $\B{\Theta} \subset \mathbb{R}^{d}$, and $\left|f_{\B{\theta}}(\mathcal{S}_{t}) - f_{\B{\theta}^{\star}} (\mathcal{S}_{t})\right| \leq \left| h(\B{\theta} - \B{\theta}^{\star})^\top  \phi (\mathcal{S} _t) \right| $, where $\phi(\mathcal{S}_{ t }) = \sum _ { i \in \mathcal{S} _ { t } } \B{x}_{i}$ and $h$ is a constant. Assume there exist constants $ \gamma$, $S_0$ such that $\forall \mathcal{S}_t \in \B{S}$ and $\B{\theta} \in \B{\Theta}$, $\| \B{\theta} \| _ { 2 } \leq S_0$, and $\| \phi ( \mathcal{S}_t ) \| _ { 2 } \leq \gamma $. Then we have
\begin{eqnarray}
\small
  \operatorname { dim } _ { M } ( \mathcal { F } , \epsilon )  \leq 3 d  \frac { e } { e - 1 } \ln \left\{ 3  \left( \frac { 2 S_0 h \gamma } { \epsilon } \right) ^ { 2 } \right\} + 1.
\end{eqnarray}
\end{lemma}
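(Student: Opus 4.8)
The plan is to reduce the bound on $\dim_M(\mathcal{F},\epsilon)$ to a bound on the eluder dimension of a linear function class, and then to control that by a determinant (volume) growth argument. Recall that $\dim_M(\mathcal{F},\epsilon)$ is the length of the longest sequence $\mathcal{S}_1,\dots,\mathcal{S}_n$ in which every $\mathcal{S}_t$ is $\epsilon$-independent of its predecessors, i.e.\ there are $\B{\theta},\B{\theta}^\star\in\B{\Theta}$ with $\sum_{i<t}\big(f_{\B{\theta}}(\mathcal{S}_i)-f_{\B{\theta}^\star}(\mathcal{S}_i)\big)^2\le\epsilon^2$ yet $|f_{\B{\theta}}(\mathcal{S}_t)-f_{\B{\theta}^\star}(\mathcal{S}_t)|>\epsilon$. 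Writing $\B{w}=h(\B{\theta}-\B{\theta}^\star)$, the hypothesis $|f_{\B{\theta}}(\mathcal{S})-f_{\B{\theta}^\star}(\mathcal{S})|\le|\B{w}\phi(\mathcal{S})^\top|$ lets me replace the function differences by the linear functionals $\B{w}\phi(\mathcal{S})^\top$, with parameter bound $\|\B{w}\|_2\le h\|\B{\theta}-\B{\theta}^\star\|_2\le 2S_0h$ (triangle inequality, since $\|\B{\theta}\|_2,\|\B{\theta}^\star\|_2\le S_0$) and feature bound $\|\phi(\mathcal{S})\|_2\le\gamma$. It therefore suffices to bound the eluder dimension of the linear class $\{\mathcal{S}\mapsto\B{w}\phi(\mathcal{S})^\top:\|\B{w}\|_2\le 2S_0h\}$.

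For such an $\epsilon$-independent sequence, fix the regularizer $\lambda=\epsilon^2/(2S_0h)^2$ and define the regularized Gram matrices $\B{V}_t=\lambda\B{I}+\sum_{i=1}^t\phi(\mathcal{S}_i)^\top\phi(\mathcal{S}_i)$. The first step is to show that independence at time $t$ forces $\phi(\mathcal{S}_t)\B{V}_{t-1}^{-1}\phi(\mathcal{S}_t)^\top>1/2$. Taking the witness $\B{w}_t$ for step $t$, I have $\B{w}_t\B{V}_{t-1}\B{w}_t^\top=\sum_{i<t}(\B{w}_t\phi(\mathcal{S}_i)^\top)^2+\lambda\|\B{w}_t\|_2^2\le\epsilon^2+\epsilon^2=2\epsilon^2$, while $(\B{w}_t\phi(\mathcal{S}_t)^\top)^2>\epsilon^2$; Cauchy--Schwarz in the $\B{V}_{t-1}$-weighted inner product, $(\B{w}_t\phi(\mathcal{S}_t)^\top)^2\le\big(\B{w}_t\B{V}_{t-1}\B{w}_t^\top\big)\big(\phi(\mathcal{S}_t)\B{V}_{t-1}^{-1}\phi(\mathcal{S}_t)^\top\big)$, then gives $\phi(\mathcal{S}_t)\B{V}_{t-1}^{-1}\phi(\mathcal{S}_t)^\top>1/2$. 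The matrix-determinant lemma for the rank-one update yields $\det\B{V}_t=\det\B{V}_{t-1}\big(1+\phi(\mathcal{S}_t)\B{V}_{t-1}^{-1}\phi(\mathcal{S}_t)^\top\big)>\tfrac32\det\B{V}_{t-1}$, so after $n$ such steps $\det\B{V}_n>(3/2)^n\lambda^d$.

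For the matching upper bound I use the trace: $\operatorname{tr}\B{V}_n=d\lambda+\sum_{i=1}^n\|\phi(\mathcal{S}_i)\|_2^2\le d\lambda+n\gamma^2$, whence the AM--GM inequality on the eigenvalues gives $\det\B{V}_n\le(\operatorname{tr}\B{V}_n/d)^d\le(\lambda+n\gamma^2/d)^d$. Combining the two estimates and substituting $\gamma^2/\lambda=(2S_0h\gamma/\epsilon)^2$ reduces everything to the scalar inequality $(3/2)^{n/d}\le 1+\tfrac{n}{d}\big(\tfrac{2S_0h\gamma}{\epsilon}\big)^2$, and it remains only to solve this for $n$.

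The main obstacle is this last step: turning the transcendental inequality $(3/2)^{u}\le 1+uA$ (with $u=n/d$ and $A=(2S_0h\gamma/\epsilon)^2$) into the explicit form $u\le 3\tfrac{e}{e-1}\ln\{3A\}$. I expect to do this by taking logarithms, bounding $1+uA\le u(1+A)$ for $u\ge1$, and then applying $\ln u\le \eta u-1-\ln\eta$ (which follows from $\ln x\le x-1$) with $\eta$ chosen optimally; the optimal $\eta$ is exactly what produces the constant $\tfrac{e}{e-1}$ and the factor $3$ inside the logarithm, after which the leading $+1$ accounts for the first element of the sequence. I should also revisit the reduction in the first paragraph, where the hypothesis only \emph{upper} bounds the difference by $|\B{w}\phi(\mathcal{S})^\top|$: to control the cumulative term $\sum_{i<t}(\B{w}\phi(\mathcal{S}_i)^\top)^2$ I am implicitly using that the difference is governed two-sidedly by the linear functional (the model is effectively linear in $\B{\theta}$ through $\phi$), and making this matching lower bound explicit is the one place where the argument genuinely needs care.
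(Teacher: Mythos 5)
Your proposal follows essentially the same route as the paper's proof: linearize via $\B{w}=h(\B{\theta}-\B{\theta}^{\star})$ with $\|\B{w}\|_2\le 2S_0h$, show that each $\epsilon$-independent step forces $\phi(\mathcal{S}_t)\B{V}_{t-1}^{-1}\phi(\mathcal{S}_t)^\top>1/2$ with $\lambda=(\epsilon/(2S_0h))^2$, sandwich $\det\B{V}_n$ between $(3/2)^n\lambda^d$ (matrix determinant lemma) and $(\lambda+n\gamma^2/d)^d$ (trace/AM--GM), and solve the resulting scalar inequality to obtain the $3\frac{e}{e-1}\ln\{3(2S_0h\gamma/\epsilon)^2\}$ bound. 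The caveat you flag at the end --- that the one-sided hypothesis $|f_{\B{\theta}}-f_{\B{\theta}^{\star}}|\le|h(\B{\theta}-\B{\theta}^{\star})^\top\phi(\mathcal{S}_t)|$ does not by itself let you pass from $\sum_{i<t}(f_{\B{\theta}}-f_{\B{\theta}^{\star}})^2(\mathcal{S}_i)\le\epsilon^2$ to $\sum_{i<t}(\B{w}^\top\phi(\mathcal{S}_i))^2\lesssim\epsilon^2$ --- is a legitimate observation, but the paper's own proof makes exactly the same silent two-sided assumption at the corresponding step, so your argument is not missing anything relative to it.
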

%Lemma 1, 2 can be proofed following the Proposition 9 and 11 in~\cite{russo2014learning}. The difference is that the variable in our approach is a set of arms $s$ instead of a single arm $a$.
According to Lemma 1, in our problem, $C=1$, and we can choose $\alpha_0 = 1/T^2, \delta = 1/T^2$. Then, the Bayesian risk bound of the proposed method is given by the following Theorem.
\begin{theorem}
When T is sufficient large, the Bayesian risk bound of DC$^2$B is
\begin{eqnarray}
  Regret \left( T , \pi ^ { T S } \right) =  O \left( d \ln \left( \alpha s e^{2 \alpha s} T \sqrt{d} \right) \sqrt { T } \right),
\end{eqnarray}
where $s$ is the number of items in a selected set, $d$ is the dimension of $\B{\theta}$.
\end{theorem}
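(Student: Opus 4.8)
The plan is to specialize the posterior-sampling regret bound of Lemma 1 to our reward $f_{\B{\theta}}$ by supplying the two problem-dependent quantities it requires: the metric dimension $\dim_M(\mathcal{F},T^{-1})$, controlled by Lemma 2, and the $\alpha_0$-covering number $N(\mathcal{F},\alpha_0,\|\cdot\|_\infty)$ entering $\beta_T^*$. Once both are written in terms of $d$, $s$ and $\alpha$, the stated rate follows by algebra. The first task is therefore to verify the hypothesis of Lemma 2 for our reward and to read off the constants $h$, $\gamma$ and $S_0$.

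\textbf{Computing the constants.} I would first observe that $f_{\B{\theta}}(\mathcal{S}_t)=\big(\prod_{a_i\in\mathcal{S}_t} r_i^{2\alpha+1}/(1+r_i)\big)\det(\B{X}_{[\mathcal{S}_t]}\B{X}_{[\mathcal{S}_t]}^\top)$ depends on $\B{\theta}$ only through the scalar $g(\B{\theta})=\prod_{a_i\in\mathcal{S}_t} r_i^{2\alpha+1}/(1+r_i)$, since the Gram determinant is $\B{\theta}$-free. Using $\partial_{\B{\theta}}\log r_i=\B{x}_i$ and $\partial_{\B{\theta}}\log p_i=(1-p_i)\B{x}_i$ gives $\nabla_{\B{\theta}} f_{\B{\theta}}=f_{\B{\theta}}\sum_{a_i\in\mathcal{S}_t}(2\alpha+1-p_i)\B{x}_i$, and a mean-value argument produces the Lipschitz bound required by Lemma 2 with $\phi(\mathcal{S}_t)=\sum_{a_i\in\mathcal{S}_t}\B{x}_i$. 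With $\|\B{x}_i\|_2\le B$ this gives $\gamma=\|\phi(\mathcal{S}_t)\|_2=O(s\sqrt{d})$, and $S_0$ is the (assumed bounded) radius of $\B{\Theta}$. The decisive constant is $h$: controlling $f_{\B{\theta}}$ through its supremum uses $\prod_{a_i\in\mathcal{S}_t} r_i^{2\alpha}=\exp\!\big(2\alpha\sum_{a_i\in\mathcal{S}_t}\B{\theta}\B{x}_i^\top\big)$, and since each log-quality $\B{\theta}\B{x}_i^\top$ is bounded this product is of order $e^{2\alpha s}$ over an $s$-item super arm, while the weight $(2\alpha+1)$ supplies the linear $\alpha$; hence $h=O(\alpha e^{2\alpha s})$.

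\textbf{Dimension and covering number.} Putting $\epsilon=T^{-1}$ into Lemma 2 and writing $\Lambda:=\alpha s e^{2\alpha s}\sqrt{d}\,T$ for brevity, I obtain $\dim_M(\mathcal{F},T^{-1})=O\big(d\ln(S_0 h\gamma T)\big)=O(d\ln\Lambda)$. For the covering number I would use that $\B{\theta}\mapsto f_{\B{\theta}}$ is $h\gamma$-Lipschitz under $\|\cdot\|_\infty$, so an $\alpha_0$-net of $\mathcal{F}$ is induced by an $(\alpha_0/h\gamma)$-net of $\{\|\B{\theta}\|_2\le S_0\}$, whence $N(\mathcal{F},\alpha_0,\|\cdot\|_\infty)\le(3S_0 h\gamma/\alpha_0)^d$. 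With $\alpha_0=\delta=T^{-2}$ this gives $\ln(N/\delta)=O(d\ln\Lambda)$; reading the scale in the $2\alpha_0 t$ contribution of $\beta_T^*$ as the covering resolution $\alpha_0$ (otherwise the term would force linear regret), that term is negligible for $\alpha_0=T^{-2}$, so $\beta_T^*(\mathcal{F},\alpha_0,\delta)=O(d\ln\Lambda)$ as well.

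\textbf{Assembling the bound.} Substituting both estimates into Lemma 1, the leading term is
\[
4\sqrt{\dim_M(\mathcal{F},T^{-1})\,\beta_T^*(\mathcal{F},\alpha_0,\delta)\,T}=O\!\big(\sqrt{d\ln\Lambda\cdot d\ln\Lambda\cdot T}\big)=O\!\big(d\ln\Lambda\,\sqrt{T}\big),
\]
while the additive $1+[\dim_M(\mathcal{F},T^{-1})+1]C=O(d\ln\Lambda)$ is of lower order than $\sqrt{T}\,\ln\Lambda$ and is absorbed, yielding $Regret(T,\pi^{TS})=O(d\ln(\alpha s e^{2\alpha s}\sqrt{d}\,T)\sqrt{T})$. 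The main obstacle is the second step: establishing the Lipschitz inequality of Lemma 2 for this product-and-determinant reward and tracking how the product of exponential item qualities over an $s$-element super arm produces the $e^{2\alpha s}$ factor inside $h$. Once $h$, $\gamma$ and $S_0$ are pinned down, the remaining steps are routine substitution and simplification.
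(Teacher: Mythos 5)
Your proposal is correct and follows essentially the same route as the paper: establish the Lipschitz condition of Lemma 2 with $h=\Theta(\alpha e^{2\alpha s})$ and $\phi(\mathcal{S}_t)=\sum_{a_i\in\mathcal{S}_t}\B{x}_i$, transfer an $(\alpha_0/h\gamma)$-net on $\B{\Theta}$ to an $\alpha_0$-cover of $\mathcal{F}$ to get $N(\mathcal{F},\alpha_0,\|\cdot\|_\infty)=O((h\gamma/\alpha_0)^d)$, set $\alpha_0=\delta=T^{-2}$ so the second term of $\beta_T^*$ vanishes, and substitute into Lemma 1. The only cosmetic difference is that you obtain the Lipschitz constant via the logarithmic derivative $\nabla f_{\B{\theta}}=f_{\B{\theta}}\sum_i(2\alpha+1-p_i)\B{x}_i$, whereas the paper telescopes the product $\prod_i p_i r_i^{2\alpha}$ term by term with mean-value bounds on each factor; both yield the same order for $h$ and the same final bound.
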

\begin{proof}
We first assume $\| \B{\theta} \|_2 \leq 1$, $\| \B{x}\|_2 \leq 1$, and introduce the following inequalities: (1) Mean Value Theorem: we have $|\B{\theta} \B{x}| \leq \|\B{\theta}\|_2 \|\B{x}\|_2 \leq 1$, then
$
  |p-p^{\star}| = | \rho(\B{\theta}^\top \B{x})-\rho(\B{\theta}^{\star \top} \B{x}) |
  = | \rho'(\xi)(\B{\theta} - \B{\theta}^{\star})^\top \B{x}|
  \leq \frac{1}{4} \|\B{\theta} - \B{\theta}^{\star}\|_2 \|\B{x}\|_2 $, and
$
  |r^2-r^{\star 2}| = |\exp(2\alpha \B{\theta}^\top \B{x} )-\exp(2\alpha \B{\theta}^{\star \top} \B{x})|
  =  |\exp(2\alpha \zeta)2(\B{\theta} - \B{\theta}^{\star})^\top \B{x}|
  \leq  2\alpha e^{2\alpha} \|\B{\theta} - \B{\theta}^{\star} \|_2 \|\B{x}\|_2$, where $\rho'(x) = \rho(x)(1-\rho(x)) \leq \frac{1}{4}$, $0 \leq \xi \leq 1$, $0 \leq \zeta \leq 1$;
(2) Gram Inequality: $
  |\operatorname { det }  ( \B{X}_{[\mathcal{S}]}^{\top} \B{X}_{[\mathcal{S}]} ) | = \left|\det \left( G \left( \B{x} _ { 1 } , \cdots , \B{x} _ { s } \right)\right) \right|
  \leqslant   \left\| \B{x} _ { 1 } \right\|_2 ^ { 2 } \cdots \left\| \B{x} _ { s } \right\|_2 ^ { 2 } \leq 1,
$ where $[G \left( \B{x} _ { 1 } , \cdots , \B{x} _ { n } \right)]_{i, j} = \B{x}_i^\top \B{x}_j$ defines a gram matrix;
(3) Triangle inequality: $
  |x_1x_2 - y_1y_2| = |x_1x_2 - y_1x_2 + y_1x_2 - y_1y_2| \leq |x_1-y_1||x_2| + |y_1||x_2-y_2|.
$
Based on these inequalities, we have,
\begin{align}
\small
  |f_{\B{\theta}}(\mathcal{S}_t) - f_{\B{\theta}^{\star}}(\mathcal{S}_t)|
  &= | \operatorname { det } ( \B{X} _ { [\mathcal{S}_t] } ^ { \top } \B{X} _ { [\mathcal{S}_t] } )| | \prod_{i=1}^{s} p_i r_i ^{2\alpha} - \prod_{i=1}^{s} p_i ^{\star} r_i ^{\star 2 \alpha} |  \nonumber\\
  &\leq (\frac{8\alpha + 1}{4} ) e^{2 \alpha s}|(\B{\theta} - \B{\theta} ^{\star})^\top \sum_{i=1}^{s} \B{x}_i|
  \leq \frac{8\alpha +1}{4} s e^{2 \alpha s} \sqrt{d} \| \B{\theta} - \B{\theta} ^{\star} \|_{\infty},
\end{align}
where we use inequality $\|\B{\theta}\|_2 \leq \sqrt{d} \|\B{\theta}\|_{\infty}$. According to Eq.(13), an $\alpha_0$-covering of $\mathcal{F}$ can therefore be attained through an $(\alpha_0 / \gamma)$-covering of $\B{\Theta} \subset \mathbb { R } ^ { d }$, where $\gamma = \frac{8\alpha +1 }{4} s e^{2 \alpha s} \sqrt{d} $. Evenly divide $\mathbb { R } ^ { d }$ in each dimension, we can obtain $N \left( \mathbb { R }^d , \alpha_0 , \| \cdot \| _ { \infty } \right) = ( 1 / \alpha_0 ) ^ { d }$. Then, we have
\begin{align}
\small
  N \left( \mathcal { F } , \alpha_0 , \| \cdot \| _ { \infty } \right) = ( \gamma / \alpha_0 ) ^ { d }
  = \left( \frac{(8\alpha + 1)se^{2 \alpha s} \sqrt{d}}{4\alpha_0} \right)^d.
\end{align}
In our problem, $S_0=1$, $h= \frac{8\alpha +1 }{4} e^{2\alpha s}$ and $\gamma = s$. According to Lemma 2 and Eq.(13), we have the following bound on $\operatorname { dim } _ { M } \left( \mathcal { F } , T ^ { - 1 } \right)$:
\begin{eqnarray}
\small
  \operatorname { dim } _ { M } ( \mathcal { F } , T^{-1}) \leq 3 d \frac { e } { e - 1 } \ln \left\{  3 \left( \frac { (8\alpha +1 )  e^{2 \alpha s} T s } { 2 } \right) ^ { 2 } \right\} + 1.
\end{eqnarray}
Let $\alpha_0 = 1/T^2$, $\delta = 1/T^2$, $C=1$. When $T$ is sufficient large, the second part of  $\beta _ { T } ^ { * } ( \mathcal { F } ,  \alpha_0, \delta  )$  will decrease to zero. After some calculation together with above two bounds, we can finish the proof.
\end{proof}

The upper bound in Theorem 1 mainly depends on the dimensionality of model parameter $d$, the size of recommended item set $s$, and the quality controlling parameter $\alpha$. Here, $d$ describes the model complexity. As $d$ increases, $\B{\theta}$ is able to model more complex scenarios. However, a sophisticated model would cause over-fitting, resulting in poor performances. Therefore, the regret bound would be high, when $d$ is large. The Proposition 9 in~\cite{russo2014learning} gives the Bayesian risk bound for non-combinatorial bandit methods as $O ( r d \sqrt{T} \log  (r T ) )$, where $r$ is a parameter determined by the reward function. By simply repeating the recommendation $s$ times to get a set of items, the bound would be $O (s r d \sqrt{T} \log (r T ) )$. In DC$^2$B, if we set $\alpha = 1$, the Bayesian regret bound would be $O(d \sqrt{T} \log( s e^{2s} \sqrt{d} T) )$, which is slightly different from multiplying $s$ to the bound of non-combinatorial methods. This is because our reward function also takes the recommendation diversity into account. As $\alpha$ controls the impacts of item qualities, the increase of $\alpha$ would increase the risks caused by the estimation of item qualities. Thus, the regret will grow as $\alpha$ increases. %Moreover, we also note that if $T \gg d$, $T \gg s$ and $T \gg \alpha$ (in practice, $d, s, \alpha$ are small constants), the bound becomes $O(\sqrt{T}\log T )$.

\section{Experiments}

\subsection{Experimental Settings}
\noindent\textbf{Datasets}: The experiments are performed on the following datasets: Movielens-100K, Movielens-1M\footnote{https://grouplens.org/datasets/movielens/}, and Anime\footnote{https://www.kaggle.com/CooperUnion/anime-recommendations-database}. Movilens-100K contains 100,000 ratings given by 943 users to 1,682 movies, and Movielens-1M contains 1,000,209 ratings given by 6,040 users to 3,706 movies. There are 18 movie categories in both Movielens datasets. We denote these two datasets by ML-100K and ML-1M, respectively. For Anime dataset, there are 7,813,737 ratings given by 73,515 users to 11,200 animes, and there are 44 anime categories. Following~\cite{rendle2009bpr}, we keep the ratings larger than 3 as positive feedback on ML-100K and ML-1M datasets, and keep the ratings larger than 6 as positive feedback on the Anime dataset. Table~\ref{tab:dataset} summarizes the statistics of the experimental datasets, where movies and animes are ``items". In these datasets, each item may belong to multiple categories.
\begin{table}
    \centering
    \small
    %\vspace{-16px}
    \caption{Statistics of the experimental datasets.}
    \label{tab:dataset}
    \begin{tabular}{|l|c|c|c|c|c|} \hline
    Datasets & \# Users & \# Items  & \# Interactions & \# Categories & Density \\\hline
    ML-100K & 942 & 1,447 & 55,375 & 18 & 4.06\%  \\ \hline
    ML-1M & 6,038 & 3,533 & 575,281 & 18 & 2.70\%\\ \hline
    Anime & 69,400 & 8,825 & 5,231,117 & 44 & 0.85\%\\ \hline
    \end{tabular}
    \vspace{-10px}
\end{table}

\noindent\textbf{Setup and Metrics}: We employ the unbiased offline evaluation strategy~\cite{li2011unbiased} to evaluate the recommendation methods. Following~\citep{zhao2013interactive,qin2014contextual,wang2017interactive}, we assume that users' ratings on items recorded in our experimental datasets are not biased by the recommender system, and these records can be regarded as unbiased user feedback in our experimental settings. In the experiments, we randomly partition each dataset into two non-overlapping sets, by randomly sampling 80\% of the users for training and using the remaining 20\% users for testing. Moreover, we employ BPRMF~\cite{rendle2009bpr} to learn the embeddings of items, which are used as the contextual features of arms. Empirically, we set the dimensionality of the item embeddings to 10. %To learn more accurate embeddings, we also remove users that have less than 10 interaction items in the training data.
As users are usually interested in a few top-ranked recommendation items, we adopt Precision@$N$ to evaluate the recommendation accuracy~\cite{shi2014collaborative}. Specifically, $N$ is set to 10, 30, and 50. We also evaluate the average recommendation diversity of each method over all recommendation trials, by using the intra-list distance (ILD)~\cite{zhang2008avoiding} metric as follows:
\begin{equation}
\small
  \mbox{Diversity}=\frac{1}{T}\sum_{t=1}^{T}\bigg[\frac{2}{|\mathcal{S}_t|(|\mathcal{S}_t|-1)}\sum_{a_{i}\in \mathcal{S}_t}\sum_{a_{j}\in \mathcal{S}_t, i\neq j}(1 - sim_{ij} )\bigg],
\end{equation}
where $\mathcal{S}_{t}$ is recommended item set at trial $t$, $|\mathcal{S}_t|$ denotes the size of $\mathcal{S}_{t}$, $T$ is the total number of recommendation trials, $sim_{ij}$ denotes the similarity between $a_{i}$ and $a_{j}$. As an item may belong to multiple item categories, we define the item similarity $sim_{ij}$ by using the Jaccard similarity of the categories of two items. For these accuracy and diversity metrics, we first compute the value for each user, and then report the averaged value over all users. Following~\cite{cheng2017learning}, we also employ F-measure to evaluate the performances of different methods on trading-off between accuracy and diversity, where \emph{F-measure=2*accuracy*diversity / (accuracy+diversity)}.

\textbf{Evaluated Recommendation Methods}: As the training users are non-overlapping with the testing users, the recommendation algorithms~\cite{shi2014collaborative} designed for warm-start settings are not suitable as baselines. In this paper, we compare DC$^2$B with the following recommendation methods: %(1) \textbf{Random}: This method randomly selects a super arm from the available arms in each round;
(1) \textbf{LogRank}: In this method, we  define the quality score of each arm $a_{i}$ as $r_{i}= 1 / (1 +\exp(-\bar{\B{u}}\B{x}_{i}^{\top}))$, where $\bar{\B{u}}$ is the mean of the user embeddings learnt from the training data. Then, the $|\mathcal{S}_{t}|$ available arms with the highest quality scores are selected as a super arm $\mathcal{S}_{t}$ for recommendation at trial $t$; (2) \textbf{MMR}: This method employs MMR strategy~\cite{carbonell1998use} to promote the recommendation diversity. At trial $t$, this method sequentially selects an available arm with the largest maximal marginal relevance score into $\mathcal{S}_{t}$. The maximal marginal relevance score is defined as $\tilde{r}_{i}=\alpha r_{i} - \frac{(1-\alpha)}{|\mathcal{S}_{t}|}\sum_{j \in \mathcal{S}_{t}}sim(\B{x}_{i}, \B{x}_{j})$, where $r_{i}$ is the arm quality defined in the LogRank method, and $sim(\B{x}_{i}, \B{x}_{j})$ is the Cosine similarity between $\B{x}_{i}$ and $\B{x}_{j}$; (3) \textbf{$\epsilon$-Greedy}: This method randomly adds an available arm into $\mathcal{S}_{t}$ with probability $\epsilon$, and adds the arm with highest quality into $\mathcal{S}_{t}$ with probability $1-\epsilon$. The item quality is defined the same as in LogRank method; (4) \textbf{DPP$^{map}$}~\citep{chen2018fast}: This non-interactive method uses determinantal point process to promote recommendation diversity. The item quality is defined the same as in LogRank. (5) \textbf{C$^2$UCB}~\cite{qin2014contextual}: This methods integrates the LinUCB framework with an entropy regularizer to promote diversity for interactive recommendation; %It achieves better performances than K-LinUCB;
(6) \textbf{EC-Bandit}~\cite{qi2018bandit}: This bandit method is based on Thompson sampling framework and developed for interactive recommendation with users' implicit feedback. In this method, the user need to interacts with the recommender $|\mathcal{S}_{t}|$ times to generate the recommended item set at trial $t$. For all methods, we empirically set the size of $\mathcal{S}_{t}$ to 10 in each trial. A validation set is sampled from training data to choose hyper-parameters. The best parameter settings for each method are as follows. $\alpha$ is set to 0.9 for MMR. $\epsilon$ is set to 0.1 for $\epsilon$-Greedy, and $\theta$ is set to 0.6 for DPP$^{map}$. In C$^2$UCB, we set $\lambda_0=100$, $\lambda = 0.1$, and $\sigma = 1$. In EC-Bandit, we set the parameter $\lambda = 1$. For DC$^2$B, we empirically set $\alpha =3$, and $\lambda = 1$, on all datasets.

\begin{table}
    \centering
    \small
    \caption{Recommendation performances of different algorithms. The best results are in \textbf{bold faces} and the second best results are \underline{underlined}. }
    \label{tab:precision}
    \begin{tabular}{|l|l|c|c|c|c|c|c|c|} \hline
    Datasets &Metrics &LogRank & MMR & $\epsilon$-Greedy & DPP$^{map}$ & C$^2$UCB & EC-Bandit & DC$^2$B \\\hline
    \multirow{5}{*}{ML-100K}
    & Prec.@10 & 0.3548 & \textbf{0.3665} & 0.3421 & \textbf{0.3665} & 0.3633 & 0.2128 & \underline{0.3649} \\\cline{2-9}
    %& Prec.@20 & 0.3133 & 0.3186 & 0.3014 & \textbf{0.3479} & 0.1779 & \underline{0.3255} \\ \cline{2-8}
    & Prec.@30 & 0.2872 & 0.2872 & 0.2792 & 0.2846 & \textbf{0.3415} & 0.1633 & \underline{0.3211} \\ \cline{2-9}
    %& Prec.@40 & 0.2677 & 0.2657 & 0.2582 & \textbf{0.3287} & 0.1547 & \underline{0.3047} \\\cline{2-8}
    & Prec.@50 & 0.2507 & 0.2499 & 0.2433 & 0.2554 & \textbf{0.3146} & 0.1453 & \underline{0.2882}\\ \cline{2-9}
    & Diversity & 0.8024 & \underline{0.8151} & 0.8145 & 0.7985 & 0.7827 & \textbf{0.8356} & 0.8118 \\ \cline{2-9}
    & F-meas. & 0.3820 & 0.3825 & 0.3747 & 0.3870 & \textbf{0.4488} & 0.2476 & \underline{0.4254} \\
    \hline\hline
    \multirow{5}{*}{ML-1M}
    & Prec.@10 & \textbf{0.3785} & 0.3754 & 0.3631 & \underline{0.3764} & 0.3418 & 0.2160 & \textbf{0.3785} \\ \cline{2-9}
    %& Prec.@20 & \underline{0.3466} & 0.3464 & 0.3335 & 0.3371 & 0.1888 & \textbf{0.3592} \\\cline{2-8}
    & Prec.@30 & \underline{0.3204} & 0.3173 & 0.3084 & 0.3173 & 0.3192 & 0.1750 & \textbf{0.3401} \\\cline{2-9}
    %& Prec.@40 & 0.3009 & 0.3013 & 0.2910 & \underline{0.3093} & 0.1670 & \textbf{0.3255} \\\cline{2-8}
    & Prec.@50 & 0.2841 & 0.2824 & 0.2745 & 0.2807 & \underline{0.2998} & 0.1611 & \textbf{0.3117} \\\cline{2-9}
    & Diversity & \underline{0.8516} & \textbf{0.8531} & 0.8462 & 0.8174 & 0.8319 & 0.8326 & 0.8367 \\ \cline{2-9}
    & F-meas. & 0.4261 & 0.4221 & 0.4145 & 0.4179 & \underline{0.4408} & 0.2700 & \textbf{0.4542}\\\hline
   \hline
    \multirow{5}{*}{Anime}
    & Prec.@10 & \underline{0.3141} & \textbf{0.3157} & 0.2867 & \textbf{0.3157}& 0.0095 & 0.1733 & 0.3003 \\\cline{2-9}
    %& Prec.@20 & \underline{0.2808} & 0.2766 & 0.2583 & 0.0125 & 0.1469 & \textbf{0.2853}\\ \cline{2-8}
    & Prec.@30 & 0.2527 & 0.2534 & 0.2366 & \underline{0.2541} & 0.1116 & 0.1326 & \textbf{0.2666} \\\cline{2-9}
    %& Prec.@40 & 0.2324 & \underline{0.2338} & 0.2163 & 0.1446 & 0.1239 & \textbf{0.2536} \\\cline{2-8}
    & Prec.@50 & 0.2165 & \underline{0.2178} & 0.2025 & 0.2164 & 0.1518 & 0.1168 & \textbf{0.2419} \\\cline{2-9}
    & Diversity & 0.8323 & \underline{0.8495} & \textbf{0.8521} & 0.8414 & 0.5031 & 0.8460 & 0.8355 \\\cline{2-9}
    & F-meas. & 0.3436 & \underline{0.3467} & 0.3272 & 0.3443 & 0.2332 & 0.2053 & \textbf{0.3752}\\
    \hline
    \end{tabular}
    \vspace{-10px}
\end{table}

\subsection{Results and Discussion}
The recommendation accuracies and diversity of different algorithms are summarized in Table~\ref{tab:precision}. As shown in Table~\ref{tab:precision}, the proposed DC$^2$B method usually achieves the best recommendation accuracy (i.e., Precision@$N$) on ML-1M and Anime datsets, and achieves the second best accuracy on ML-100K dataset.
%For example, in terms of Precision@50, DC$^2$B outperforms the non-interactive methods LogRank, MMR, and $\epsilon$-Greedy by 9.71\%, 10.38\%, and 13.55\% respectively, on ML-1M dataset. Compared with the interactive recommendation methods, on ML-1M dataset, DC$^2$B achieves 3.97\% and 93.48\% better results than C$^2$UCB and EC-Bandit, in terms of Precision@50.
For example, on Anime dataset, DC$^2$B significantly outperforms C$^2$UCB and EC-Bandit by 59.35\% and 107.11\%, and achieves 11.73\%, 11.07\%, 19.46\%, and 11.78\% better performances than LogRank, MMR, $\epsilon$-Greedy, and DPP$^{map}$, in terms of Precision@50. These results indicate that DC$^2$B is more effective than baseline methods on large and sparse dataset. %Moreover, we also note that LogRank achieves reasonable recommendation accuracy on all datasets. This indicates that we can use the mean of learnt user embeddings to approximate the preferences of a new user, because users are more likely to interact with popular items.
Moreover, we also note the combinatorial bandit methods C$^2$UCB and DC$^2$B significantly outperform EC-Bandit. One potential reason is that the combinatorial methods employ the user's feedback on a set of items to update model parameters. However, EC-Bandit uses the user's feedback on a single item to update model parameters. The parameter learning of C$^2$UCB and DC$^2$B is more stable than that of EC-Bandit, thus C$^2$UCB and DC$^2$B can achieve better recommendation accuracy. In addition, we can note that the non-interactive methods MMR, and $\epsilon$-Greedy usually achieves slightly higher recommendation diversity than DC$^2$B, and DC$^2$B attains better recommendation diversity than DPP$^{map}$ on ML-100K and ML-1M datasets. Comparing with interactive methods, Table~\ref{tab:precision} indicates that the recommendation diversity of DC$^2$B is higher than that of C$^2$UCB on all datasets, and EC-Bandit achieves higher recommendation diversity than DC$^2$B on ML-100K and Anime datasets.
%EC-bandit can only recommend one item each time. It needs to update the model multiple times to generate the recommended item set $\mathcal{S}_{t}$ in each trial, and the items in $\mathcal{S}_{t}$ are generated by different model parameters. However, in DC$^2$B, the items in $\mathcal{S}_{t}$ in each trial are chosen by the same model parameters. Therefore, the recommendation results of EC-bandit may be more diverse, but the recommendation accuracy may be very low.

For better understanding the results, F-measure is used to study the effectiveness of each recommendation algorithm in balancing the recommendation accuracy and diversity. Here, we use Precision@50 and Diversity to compute the F-measure. As shown in Table~\ref{tab:precision}, we can note that DC$^2$B achieves the best F-measure value on ML-1M and Anime datasets, and the second best F-measure value on ML-100K dataset. In addition, we summarize the relative improvements of DC$^2$B over baseline methods on Precision@50, Diversity, and F-measure in Table~\ref{tab:improvements}. These results demonstrate that the proposed DC$^2$B method is more effective in balancing the recommendation accuracy and diversity than the baseline methods, especially on larger and sparser datasets.

Moreover, we also evaluate the impacts of $\alpha$ and the size of super arm $|\mathcal{S}_{t}|$ on the performances of DC$^2$B, on ML-100K dataset. The parameter $\alpha$ is varied in $\{0.01, 0.1, 1, 3, 5, 10, 100\}$. Figure~\ref{fig:alphaTrend} shows the performances of DC$^2$B with respect to different settings of $\alpha$. As shown in Figure~\ref{fig:alphaTrend}, the recommendation accuracy in terms of Precision@50 firstly increases with the increase of $\alpha$. When $\alpha$ is larger than 5, the recommendation accuracy of DC$^2$B drops drastically by changing $\alpha$ to 10 and 100. We can also note that the diversity value decreases with the increase of $\alpha$, because larger $\alpha$ makes DC$^2$B focus more on the item qualities in generating recommendations. Overall, the results in Figure~\ref{fig:alphaTrend} indicate that $\alpha$ can effectively control the item qualities and item diversities when generating recommendations. Additionally, we vary the size of recommendation list $|\mathcal{S}_{t}|$ at each trial in $\{5, 10\}$. Figure~\ref{fig:sizeTrend} summarizes the accuracy of DC$^2$B with respect to different sizes of super arm. As shown in Figure~\ref{fig:sizeTrend}, larger super arm size tends to results better recommendation accuracy, when enough number of interactions (e.g., more than 3 interactions) between the user and the recommender system have been performed. This is because the model updating based on the user's feedback on a larger set of items is expected to be more stable and accurate. Moreover, the average recommendation diversity with respect to $|\mathcal{S}_{t}|=5$ and $|\mathcal{S}_{t}|=10$ are 0.8110 and 0.8118, respectively. This indicates that $|\mathcal{S}_{t}|$ does not have significant impacts on the recommendation diversity.

\begin{table}
    \centering
    \scriptsize
    \caption{Relative improvements of DC$^2$B over baselines. The positive improvements are highlighted in \textbf{bold}.}
    \label{tab:improvements}
    \begin{tabular}{|l|c|c|c|c|c|c|c|c|c|} \hline
    \multirow{2}{*}{Methods} & \multicolumn{3}{c|}{ML-100K} & \multicolumn{3}{c|}{ML-1M} & \multicolumn{3}{c|}{Anime} \\\cline{2-10}
    & Prec@50 & Div. & F-m. & Prec@50 & Div. & F-m. & Prec@50 & Div. & F-m.\\\hline
    LogRank & \textbf{+14.96\%} & \textbf{+1.17\%} & \textbf{+11.36\%} & \textbf{+9.71\%} & -1.75\% & \textbf{+6.59\%} & \textbf{+11.73\%} & \textbf{+0.38\%} & \textbf{+9.20\%}\\ \hline
    MMR & \textbf{+15.33\%} & -0.40\% & \textbf{+11.22\%} & \textbf{+10.38\%} & -1.92\% & \textbf{+7.60\%} & \textbf{+11.07\%} & -1.65\% & \textbf{+ 8.22\%}\\ \hline
    $\epsilon$-Greedy & \textbf{+18.45\%} & -0.33\% & \textbf{+13.53\%} & \textbf{+13.55\%} & -1.12\% & \textbf{+9.58\%} & \textbf{+19.46\%} & -1.95\% & \textbf{+14.67\%}\\\hline
    DPP$^{map}$ & \textbf{+12.84\%} & \textbf{+1.67\%} & \textbf{+9.92\%} & \textbf{+11.04\%} & \textbf{+2.36\%} & \textbf{+8.69\%} & \textbf{+11.78\%} & -0.70\% &  \textbf{+8.97\%}\\\hline
    C$^2$UCB & -8.39\% & \textbf{+3.72\%} & -5.21\% & \textbf{+3.97\%} & \textbf{+0.58\%} & \textbf{+3.04\%} & \textbf{+59.35\%} & \textbf{+66.07\%} & \textbf{+60.89\%}\\\hline
    EC-Bandit & \textbf{+98.35\%} & -2.85\% & \textbf{+71.81\%} & \textbf{+93.48\%} & \textbf{+0.49\%} & \textbf{+68.22\%} & \textbf{+107.11\%} & -1.21\% & \textbf{+82.76\%}\\\hline
    \end{tabular}
    %\vspace{-15px}
\end{table}

\begin{figure*}[!ht]
    \centering
    %\vspace{-13pt}
   \subfigure[]{
        \includegraphics[height=1.5in]{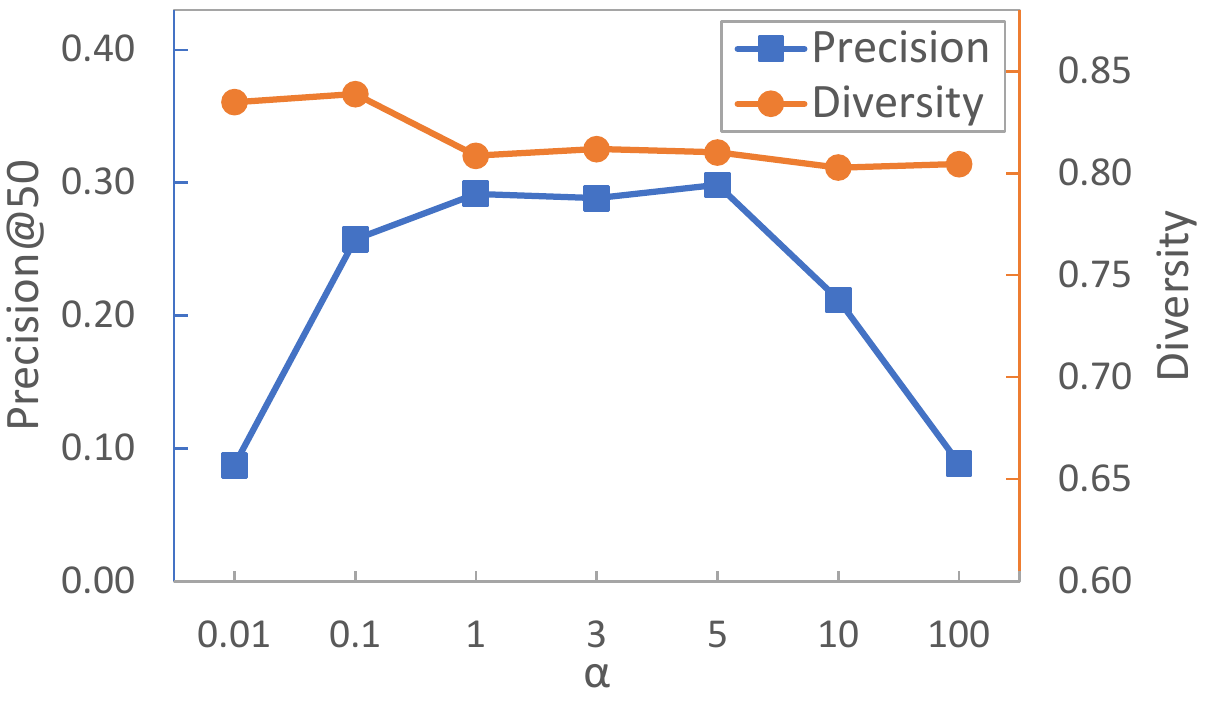}
        \label{fig:alphaTrend}
    }
    \subfigure[]{
        \includegraphics[height=1.5in]{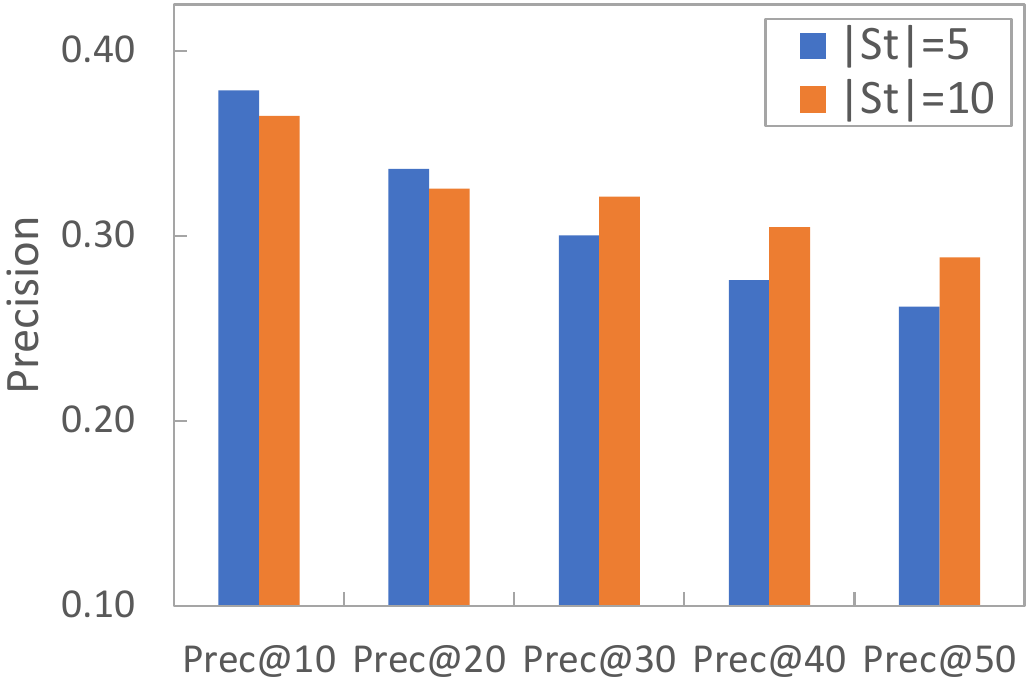}
        \label{fig:sizeTrend}
    }
    \vspace{-13pt}
   \caption{Performance trend of DC$^2$B with respect to different settings of $\alpha$ and $|\mathcal{S}_{t}|$.}
    \vspace{-18pt}
    \label{fig:diversity}
\end{figure*}

\section{Conclusion}
This work proposes a novel bandit learning method, namely Diversified Contextual Combinatorial Bandit (DC$^2$B), for interactive recommendation based on users' implicit feedback. Specifically, DC$^2$B is a full Bayesian recommendation framework, which enables the interactions between recommender system and the user, and employs determinantal point process (DPP) to promote the recommendation diversity. We have proposed Thompson sampling-type optimization algorithm to iteratively learn the model parameters, and conducted regret analysis to provide theoretical guarantee for DC$^2$B. Moreover, empirical experiments on real datasets also demonstrate the effectiveness of the proposed DC$^2$B in balancing the recommendation accuracy and diversity. The future work will focus on the following potential directions. First, we would like to develop more complex DPP kernels and more efficient DPP inference algorithms for interactive recommender systems. Second, we are also interested in developing more sophisticated models to describe the user engagements on the recommendation results. Last but not least, we will apply DC$^2$B to online recommendation scenarios to evaluate its online performances.

%\scriptsize
%\small
\bibliographystyle{unsrt}
\bibliography{references}

\appendix
\appendixpage
\addappheadtotoc

\section{Proof of Lemma 1 and Lemma 2}
%\subsection{Notation}
%For simplicity, we use $f_{\theta}$ to represent $f_{\theta}(S_t)$.
%\begin{eqnarray}
%\nonumber % Remove numbering (before each equation)
%  \hat{f}_{t}^{LS} & = & \arg \min_{f \in \mathcal{F}} L_{t} (f) \\
%\nonumber
%  L _ { t } ( f ) &=& \sum _ { k=1 } ^ { t - 1 } \left( f(S_k)  - R _ { k } \right) ^ { 2 }\\
%\nonumber
%  \| f \| _ { E _ { t } } ^ { 2 } &=& \sum _ { k = 1 } ^ { t - 1 } f ^ { 2 } \left( S _ { k } \right)
%\end{eqnarray}

\subsection{Preliminaries}
For simplicity, we first use $f_{\theta}$ to represent $f_{\theta}(S_t)$, and define $L _ { t } ( f ) = \sum _ { k=1 } ^ { t - 1 } \left( f(S_k)  - R _ { k } \right) ^ { 2 }$, $\hat{f}_{t}^{LS}  =  \arg \min_{f \in \mathcal{F}} L_{t} (f)$, and $\| f \| _ { E _ { t } } ^ { 2 } = \sum _ { k = 1 } ^ { t - 1 } f ^ { 2 } \left( S _ { k } \right)$. Then, we introduce the following two important inequalities.

\textbf{Martingale Exponential Inequalities.} \noindent \emph{Consider random variables $\left( Z _ { n } | n \in \mathbb { N } \right)$ adapted to the filtration $\left( \mathcal{ H } _ { n } : n = 0,1 , \dots \right)$, Assume that $\mathbb { E } \left[ \exp \left\{ \lambda Z _ { i } \right\} \right]$ is finite for all $\lambda $. We define $\mu _ { i } = \mathbb { E } \left[ Z _ { i } | \mathcal { H } _ { i - 1 } \right]$,  $\psi _ { i } ( \lambda ) = \log \mathbb { E } \left[ \exp \left( \lambda \left[ Z _ { i } - \mu _ { i } \right] \right) | \mathcal { H } _ { i - 1 } \right]$. Then for all $x \geq 0$ , $\lambda \geq 0$ and $\forall n \in \mathbb { N } $, we have
\begin{eqnarray}
  \mathbb { P } \left( \sum _ { 1 } ^ { n } \lambda Z _ { i } \leq x + \sum _ { 1 } ^ { n } \left[ \lambda \mu _ { i } + \psi _ { i } ( \lambda ) \right] \right) \geq 1 - e ^ { - x }.
\end{eqnarray}}

\textbf{Hoeffding's Lemma.} \noindent \emph{Let $X$ be any real-valued random variable such that $E(X)=0$ and $a\leq X \leq b$ almost surely. Then, for all $\lambda \in \mathcal{R}$,
\begin{eqnarray}
  \mathbb { E } \left[ e ^ { \lambda X } \right] \leq \exp\bigg(\lambda ^ { 2 } ( b - a ) ^ { 2 } / 8\bigg).
\end{eqnarray}}

The proof of the martingale exponential inequality can be found in the Lemma 6 in Appendix A of~\cite{russo2014learning}, and the proof of the Hoeffding's Lemma can be found in Chapter 2 of in~\cite{massart2007concentration}. To prove Lemma 1, we need the following Lemma 3 and Lemma 4.

\textbf{Lemma 3.} \noindent \emph{For any $\delta > 0$ and $f : \mathcal { A } \mapsto \mathbb { R }$, with probability at least $1-\delta$,
$
  L _ { t } ( f ) \geq L _ { t } \left( f _ { \theta } \right) + \frac { 1 } { 2 } \left\| f - f _ { \theta } \right\| _ {  E _ { t } } ^ { 2 } - 4  \log ( 1 / \delta )
$, for all $t \in \mathbb{N}.$}
\begin{proof}
We first define $\epsilon _ { t } = R _ { t } - f _ { \theta } \left( S _ { t } \right)$, $Z_t = \left( f _ { \theta } \left( S _ { t } \right) - R _ { t } \right) ^ { 2 } - \left( f \left( S _ { t } \right) - R _ { t } \right) ^ { 2 }$, where $f$ is an arbitrary function as assumed in the Lemma 1. According to the definition, we have $\mathbb { E } \left[ \epsilon _ { t } | \mathcal { H } _ { t - 1 } \right]=0$, $\sum _ { 1 } ^ { T ^ { \prime } } Z _ { t } = L _ { T + 1 } \left( f _ { \theta } \right) - L _ { T + 1 } ( f )$. With Hoeffding's Lemma and some calculation, we have
\begin{align}
  \mathbb { E } \left[ \exp \left\{ \lambda \epsilon _ { t } \right\} | \mathcal { H } _ { t - 1 } \right] &\leq \exp \left\{ \frac { \lambda ^ { 2 } \sigma ^ { 2 } } { 2 } \right\}, \\
   Z_t = - \left( f  - f _ { \theta }  \right) ^ { 2 } &+ 2 \left( f  - f _ { \theta }  \right) \epsilon _ { t }.
\end{align}
Here, we set $\sigma=1$, and it is reasonable to assume that $-1\leq \epsilon_t \leq 1$, thus $(b-a)^2 / 8 = 1/2$. Therefore, we can obtain
\begin{align}
  \mu _ { t } &= \mathbb { E } \left[ Z _ { t } | \mathcal { H } _ { t - 1 } \right] = - \left( f  - f _ { \theta } \right) ^ { 2 }, \\
\nonumber
  \psi _ { t } ( \lambda ) &= \log \mathbb { E } \left[ \exp \left( \lambda \left[ Z _ { t } - \mu _ { t } \right] \right) | \mathcal { H } _ { t - 1 } \right] \nonumber\\
  &= \log \mathbb { E } \left[ \exp \left( 2 \lambda \left( f  - f _ { \theta } \right) \epsilon _ { t } \right) | \mathcal { H } _ { t - 1 } \right]
  \leq  \frac { \left( 2 \lambda \left[ f  - f _ { \theta }  \right] \right) ^ { 2 }  } { 2 }.
\end{align}
According to the Martingale Exponential Inequality, we know that for all $x\geq 0$, $\lambda \geq 0$ and $\forall t \in \mathbb{N}$,
\begin{align}
  \mathbb { P } \bigg( \sum _ { k = 1 } ^ { t } \lambda Z _ { k } \leq x - \lambda \sum _ { k = 1 } ^ { t } \left( f  - f _ { \theta } \right) ^ { 2 }
    + \frac { \lambda ^ { 2 } } { 2 } \left( 2 f - 2 f _ { \theta } \right) ^ { 2 }   \bigg)
  \geq 1 - e ^ { - x }.
\label{eq:lemma3prob1}
\end{align}
By modifying Eq.~\eqref{eq:lemma3prob1}, we can have
\begin{align}
  \mathbb { P } \left( \sum _ { k = 1 } ^ { t } Z _ { k } \leq  \frac { x } { \lambda } + \sum _ { k = 1 } ^ { t } \left(f  - f _ { \theta }\right) ^ { 2 } \left( 2 \lambda  - 1 \right)  \right)
  \geq 1 - e ^ { - x }.
\end{align}

By setting $\lambda = \frac { 1 } { 4 }$ and $x = \log \frac { 1 } { \delta }$, we can have
\begin{align}
  \mathbb { P } \left( L _ {  t } ( f ) \geq L _ {  t } \left( f _ { \theta } \right) + \frac { \left\| f - f _ { \theta } \right\| _ {  E _ { t } } ^ { 2 } } { 2 }  - 4  \log (\frac{1}{\delta} )  \right)
  \geq 1 - e ^ { - x } = 1- \delta.
\end{align}
\end{proof}

Before showing the Lemma 4, we introduce the following results of Discretization Error.

\textbf{Discretization Error}. \noindent \emph{If $f^{\alpha}$ satisfies $\left\| f - f ^ { \alpha } \right\| _ { \infty } \leq \alpha$, then with probability at least $1-\delta$,
\begin{align}
  \ \left| \frac { \left\| f ^ { \alpha } - f _ { \theta } \right\| _ {  E _ { t } } ^ { 2 } } { 2 }  - \frac { \left\| f - f _ { \theta } \right\| _ {  E _ { t } } ^ { 2 } } { 2 }  + L _ {  t } ( f ) - L _ {  t } \left( f ^ { \alpha } \right) \right|
  \leq \alpha t \left[ \frac{15}{2} C +  \ln ( 2 t ^ { 2 } / \delta ) \right] \quad \forall t \in \mathbb { N }.
\end{align}
}
\begin{proof}
Note that $\forall f , f ^ { \alpha } \in \mathcal { F }$, $\left\| f - f ^ { \alpha } \right\| _ { \infty } \leq C$. Thus, we only consider the case $\alpha \leq C$. Firstly, we can have the following inequality,
\begin{eqnarray}
  \left| \left( f ^ { \alpha } \right) ^ { 2 } ( a ) - ( f ) ^ { 2 } ( a ) \right|
  \leq \max _ { y\in [-\alpha, \alpha] } \left| ( f ( a ) + y ) ^ { 2 } - f ( a ) ^ { 2 } \right|
  = 2 f ( a ) \alpha + \alpha ^ { 2 } \leq 2 C \alpha + \alpha ^ { 2 }.
\end{eqnarray}
Then, we can know
\begin{align}
  \left| \left( f ^ { \alpha } ( a ) - f _ { \theta } ( a ) \right) ^ { 2 } - \left( f ( a ) - f _ { \theta } ( a ) \right) ^ { 2 }\right|
  &= \left| \left[ \left( f ^ { \alpha } \right) ( a ) ^ { 2 } - f ( a ) ^ { 2 } \right] + 2 f _ { \theta } ( a ) \left( f ( a ) - f ^ { \alpha } ( a ) \right) \right| \nonumber\\
  &\leq 4 C \alpha + \alpha ^ { 2 },
  \label{eq:diseq1}
\end{align}
\begin{align}
  \left| \left( R _ { t } - f ( a ) \right) ^ { 2 } - \left( R _ { t } - f ^ { \alpha } ( a ) \right) ^ { 2 } \right|
  &= \left| 2 R _ { t } \left( f ^ { \alpha } ( a ) - f ( a ) \right) + f ( a ) ^ { 2 } - f ^ { \alpha } ( a ) ^ { 2 } \right| \nonumber\\
  &\leq 2 \alpha \left| R _ { t } \right| + 2 C \alpha + \alpha ^ { 2 }.
  \label{eq:diseq2}
\end{align}
By Summing up Eq.~\eqref{eq:diseq1} and Eq.~\eqref{eq:diseq2}, we have \begin{eqnarray}
   \sum _ { k = 1 } ^ { t - 1 } \left( \frac { 1 } { 2 } \left[ 4 C \alpha + \alpha ^ { 2 } \right] + \left[ 2 \alpha \left| R _ { k } \right| + 2 C \alpha + \alpha ^ { 2 } \right] \right)
  \leq \alpha \sum _ { k = 1 } ^ { t - 1 } \left( \frac{11}{2} C + 2 \left| R _ { k } \right| \right).
\end{eqnarray}
Recall that $R_k = f + \epsilon_k$, thus $\left| R _ { k } \right| \leq C + \left| \epsilon _ { k } \right|$. According to the Chebyshev Inequality, we have
\begin{eqnarray}
  \mathbb { P } \left( \left| \epsilon _ { k } \right| >  1 / 2 \ln ( 1 / \delta )  \right) \leq \frac{\mathbb{E} [\exp(\epsilon _k)]}{\exp (1 / 2 \ln ( 1 / \delta) )} \leq \delta,
\end{eqnarray}
where we use $\mathbb{E} [\exp(\epsilon _k)] \leq \exp (1 / 2)$. By using a union bound, we have the following equations,
\begin{align}
  \mathbb { P } \left( \exists k s . t . \left| \epsilon _ { k } \right| >  1 / 2 \ln \left( 2 k ^ { 2 } / \delta \right)  \right) &\leq \frac { \delta } { 2 } \sum _ { k=1 } ^ { \infty } \frac { 1 } { k ^ { 2 } } \leq \delta, \\
  \mathbb { P } \left( \forall k  \left| \epsilon _ { k } \right| \leq  1 / 2 \ln \left( 2 k ^ { 2 } / \delta \right)  \right) &\geq 1- \delta.
\end{align}
Then, we can finish our proof as follows,
\begin{align}
  \alpha \sum _ { k = 1 } ^ { t - 1 } \left( \frac{11}{2} C + 2 \left| R _ { k } \right| \right)
  \leq \alpha t (\frac{15}{2} C + 2 |\epsilon _{k_0}|)
  \leq \alpha t (\frac{15}{2} C +  \ln ( 2 t ^ { 2 } / \delta )),
\end{align}
with probability at least $1-\delta$. Here, $|\epsilon_{k_0}| = \max_{1\leq k \leq t-1}  |\epsilon_k|$ and we have $k_0 < t$.
\end{proof}

\textbf{Lemma 4}. \noindent \emph{For all $\delta > 0$ and $\alpha > 0$, if
\begin{eqnarray}
  \mathcal { F } _ { t } = \left\{ f \in \mathcal { F } : \left\| f - \hat { f } _ { t } ^ { L S } \right\| _ { E _ { t } } \leq \sqrt { \beta _ { t } ^ { * } ( \mathcal { F } , \delta , \alpha ) } \right\}
\end{eqnarray}
for all $t \in \mathbb { N }$, then
\begin{eqnarray}
  \mathbb { P } \left( f _ { \theta } \in \bigcap _ { t = 1 } ^ { \infty } \mathcal { F } _ { t } \right) \geq 1 - 2 \delta
\end{eqnarray}
where
$
  \beta _ { t } ^ { * } ( \mathcal { F } , \delta , \alpha )
  : = 8 \ln \left( N \left( \mathcal { F } , \alpha , \| \cdot \| _ { \infty } \right) / \delta \right)
  + 2 \alpha t \left(\frac{15}{2} C +  \ln ( 2 t ^ { 2 } / \delta ) \right)
$
and $N \left( \mathcal { F } , \alpha , \| \cdot \| _ { \infty } \right)$ denotes the $\alpha$ - covering number of $\mathcal{F}$.
}
\begin{proof}
Let $\mathcal { F } ^ { \alpha } \subset \mathcal { F }$ be an $\alpha$-cover of $\mathcal{F}$ in the sup-norm in the sense that for any $f \in \mathcal { F }$ there is an $f ^ { \alpha } \in \mathcal { F } ^ { \alpha }$, such that $\left\| f ^ { \alpha } - f \right\| _ { \infty } \leq \epsilon$. By using a union bound, with probability at least $1-\delta$,  $\forall t \in \mathbb { N } , f \in \mathcal { F } ^ { \alpha }$,
\begin{eqnarray}
  L _ {  t } \left( f ^ { \alpha } \right) - L _ {  t } \left( f _ { \theta } \right)
  \geq \frac { 1 } { 2 } \left\| f ^ { \alpha } - f _ { \theta } \right\| _ {  E _ { t } } - 4 \sigma ^ { 2 } \log \left( \left| \mathcal { F } ^ { \alpha } \right| / \delta \right).
\end{eqnarray}
Therefore, with probability at least $1-\delta$, for all $t\in \mathbb{N}$ and $f\in \mathcal{F}$:
\begin{align}
   L _ {  t } ( f ) - L _ {  t } \left( f _ { \theta } \right) \geq& \frac { 1 } { 2 } \left\| f - f _ { \theta } \right\| _ { 2 , E _ { t } } ^ { 2 } - 4 \sigma ^ { 2 } \log \left( \left| \mathcal { F } ^ { \alpha } \right| / \delta \right) \nonumber\\
   & +\underbrace { \min _ { f ^ { \alpha } \in \mathcal { F } ^ { \alpha } } \left\{ \frac { \left\| f ^ { \alpha } - f _ { \theta } \right\| _ { E _ { t } } ^ { 2 } } { 2 }  - \frac {  \left\| f - f _ { \theta } \right\| _ { E _ { t } } ^ { 2 } } { 2 } + L _ {  t } ( f ) - L _ {  t } \left( f ^ { \alpha } \right) \right\} } _ { \mbox { Discretization Error } }.
\end{align}
As $L_t(\hat{f}  ^{LS}) -  L_t(f_{\theta} )\leq 0$ by definition, using the discretization error bound, we find that with the probability at least $1-2\delta$,
\begin{eqnarray}
  \frac { 1 } { 2 } \left\| \hat { f } _ { t } ^ { \mathrm { LS } } - f _ { \theta } \right\| _ { E _ { t } } ^ { 2 } \leq 4 \sigma ^ { 2 } \log \left( \left| \mathcal { F } ^ { \alpha } \right| / \delta \right) + \alpha \eta _ { t },
\end{eqnarray}
where $\eta_t = t \left( \frac { 15 } { 2 } C + \ln \left( 2 t ^ { 2 } / \delta \right) \right)$. Taking the infimum over the size of $\alpha$ covers, we have:
\begin{eqnarray}
  \left\| \hat { f } _ { t } ^ { L S } - f _ { \theta } \right\| _ {  E _ { t } }
  \leq \sqrt { 8 \sigma ^ { 2 } \log \left( N \left( \mathcal { F } , \alpha , \| \cdot \| _ { \infty } \right) / \delta \right) + 2 \alpha \eta _ { t } }
  \stackrel { \mathrm { def } } { = }& \sqrt { \beta _ { t } ^ { * } ( \mathcal { F } , \delta , \alpha ) }.
\end{eqnarray}
\end{proof}

\subsection{Proof of Lemma 1}
The proof of Lemma 1 is similar with Proposition 9 in~\cite{russo2014learning}. The main difference is that the variable in our method is set of arms (i.e., items) $s$ instead of a single arm $a$. Following~\cite{russo2014learning}, we first introduce the following definitions.

\textbf{Definition 1.} \noindent \emph{An action $s \in \mathcal{S}$ is \emph{$\epsilon $ - dependent} on actions $\left\{ s _ { 1 } , \dots , s _ { n } \right\} \subseteq \mathcal { S}$ with respect to $\mathcal{F}$ if any pair of functions $f , \tilde { f } \in \mathcal { F }$ satisfying $\sqrt { \sum _ { i = 1 } ^ { n } \left( f \left( s _ { i } \right) - \tilde { f } \left( s _ { i } \right) \right) ^ { 2 } } \leq \epsilon$ also satisfies $|f ( s ) - \tilde { f } ( s )| \leq \epsilon$. On the other hand, $s$ is \emph{$\epsilon$ - independent } of $\left\{ s _ { 1 } , \dots , s _ { n } \right\} $ with respect to $\mathcal{F}$ if $s$ is not $\epsilon$ - dependent on $\left\{ s _ { 1 } , \dots , s _ { n } \right\} $.
}

\textbf{Definition 2.} \noindent \emph{The $\epsilon$ - margin dimension $\operatorname { dim } _ { E } ( \mathcal { F } , \epsilon )$ is the length $d$ of the longest sequence of elements in $\mathcal{S}$ such that, for some $\epsilon ^ { \prime } \geq \epsilon$, every element is $\epsilon^{'}$ - independent of its predecessors.
}

\textbf{Confidence interval width.} \noindent $w _ { \mathcal { F } } ( a ) : = \sup _ { f \in \mathcal { F } } f ( s ) - \inf _ { f \in \mathcal { F } } f ( s )$.

In addition, we also introduce the following propositions, which are modified from~\cite{russo2014learning} to fit our definitions. The proofs of these propositions are similar with that described in~\cite{russo2014learning}.

\textbf{Proposition 1.} \noindent \emph{For all $T \in \mathbb{N}$, if $\inf _ { \rho \in \mathcal { F } _ { \tau } } f _ { \rho } ( s ) \leq f _ { \theta } ( s ) \leq \sup _ { \rho \in \mathcal { F } _ { \tau } } f _ { \rho } ( s )$ for all $\tau \in \mathbb { N }$ and $s \in \mathcal { S }$ with probability at least $1- \frac{1}{T}$ then
\begin{eqnarray}
  Regret\left( T , \pi ^ { \mathrm { PS } } \right) \leq C + \mathbb { E } \sum _ { t = 1 } ^ { T } w _ { \mathcal { F } _ { t } } \left( S _ { t } \right).
\end{eqnarray}
}

\textbf{Proposition 2.} \noindent \emph{If $\left( \beta _ { t } \geq 0 | t \in \mathbb { N } \right)$ is a nondecreasing sequence and $\mathcal { F } _ { t } : = \{ f \in \mathcal { F } : \| f -\hat { f } _ { t } ^ { L S } \| _ { 2 , E _ { t } } \leq \sqrt { \beta _ { t } } \}$ then for all $T \in \mathbb{N} $ and $\epsilon > 0$
\begin{eqnarray}
  \sum _ { t = 1 } ^ { T } \mathbf { 1 } \left( w _ { f _ { t } } \left( S _ { t } \right) > \epsilon \right) \leq \left( \frac { 4 \beta _ { T } } { \epsilon ^ { 2 } } + 1 \right) \operatorname { dim } _ { E } ( \mathcal { F } , \epsilon ).
\end{eqnarray}
}

\textbf{Proposition 3.} \noindent \emph{If $\left( \beta _ { t } \geq 0 | t \in \mathbb { N } \right)$ is a nondecreasing sequence and $\mathcal { F } _ { t } : = \{ f \in \mathcal { F } : \| f -\hat { f } _ { t } ^ { L S } \| _ { 2 , E _ { t } } \leq \sqrt { \beta _ { t } } \}$ then for all $T \in \mathbb{N} $
\begin{eqnarray}
  \sum _ { t = 1 } ^ { T } w _ { \mathcal { F } _ { t } } \left( S _ { t } \right) \leq 1 + \operatorname { dim } _ { E } \left( \mathcal { F } , T ^ { - 1 } \right) C
  + 4 \sqrt { \operatorname { dim } _ { E } \left( \mathcal { F } , T ^ { - 1 } \right) \beta _ { T } T }.
\end{eqnarray}
}

Then, we can finish the proof of Lemma 1, by putting these propositions and definitions together.

\subsection{Proof of Lemma 2}
In order to give a bound for the $\epsilon$ - margin dimension, we first introduce the following equivalent definition which provides a mathematical form for the $\epsilon$ - margin dimension.

\textbf{Equivalent definition of $\epsilon$ - margin dimension.} \noindent \emph{The $\epsilon$ - margin dimension of a class of functions $\mathcal{F}$ is the length of the longest sequence $s_1, ..., s_t$ such that for some $\epsilon ' \geq \epsilon$
\begin{eqnarray}
\nonumber % Remove numbering (before each equation)
  w _ { k } &: =& \sup  \{ \left| \left( f _ { \rho _ { 1 } } - f _ { \rho _ { 1 } } \right) \left( s _ { k } \right) \right| \\
\nonumber
  && : \sqrt { \sum _ { i = 1 } ^ { k - 1 } \left( f _ { \rho _ { 1 } } - f _ { \rho _ { 2 } } \right) ^ { 2 } \left( s _ { i } \right) } \leq \epsilon ^ { \prime } \rho _ { 1 } , \rho _ { 2 } \in \Theta\}  \\
  &>& \epsilon ^ { \prime },
\end{eqnarray}
for each $k \leq t $.}

%\textbf{Lemma2}
%Suppose $\Theta \subset \mathbb { R } ^ { d }$ , $\left| f _ { \theta } \left( s _ { t } \right) - f _ { \theta ^ { \star } } \left( s _ { t } \right) \right| \leq \left| h \left( \theta - \theta ^{\star} \right)^T  \phi \left( s_k \right) \right| $, where $\phi \left( s _ { k } \right) = \sum _ { i \in s _ { k } } x _ { i }$ and $h$ is a constant. Assume there exist constants $ \gamma ,  S$ such that $\forall s \in \mathcal { S }$ and $\theta \in \Theta$, $\| \theta \| _ { 2 } \leq S ,$ and $\| \phi ( s ) \| _ { 2 } \leq \gamma$. Then
%\begin{eqnarray}
%\nonumber % Remove numbering (before each equation)
%  &&\operatorname { dim } _ { E } ( \mathcal { F } , \epsilon ) \\
%  &\leq& 3 d  \frac { e } { e - 1 } \ln \left\{  3  \left( \frac { 2 S h \gamma } { \epsilon } \right) ^ { 2 } \right\} + 1
%\end{eqnarray}

Let $\rho = \theta - \theta ^{\star}$, $\phi _ { k } = \phi \left( s _ { k } \right) = \sum_{i \in s_k} x_i $, $\Phi _ { k } = \sum _ { i = 1 } ^ { k - 1 } \phi _ { i } \phi _ { i } ^ { T }$. Then, we have $\sum _ { i = 1 } ^ { k - 1 } \left( f _ { \theta } - f _ { \theta ^ { \star } } \right) ^ { 2 } \left( s _ { i } \right) \leq h^2 \rho ^ { T } \Phi _ { k } \rho $, and $\|\rho \|_2 \leq 2S$. The proof follows by bounding the number of times $w_{k} \geq \epsilon ^ { \prime }$ can occur. We finish the proof of Lemma 2 by the following three steps.

\textbf{Step 1:} If $w _ { k } \geq \epsilon ^ { \prime }$, then $\phi _ { k } ^ { T } V _ { k } ^ { - 1 } \phi _ { k } \geq \frac { 1 } { 2 }$, where $V _ { k } : = \Phi _ { k } + \lambda I$ and $\lambda = \left( \frac { \epsilon ^ { \prime } } { 2 S h} \right) ^ { 2 }$.
\begin{proof}
By definition,
\begin{eqnarray}
\nonumber % Remove numbering (before each equation)
  \epsilon ^ { \prime } &\leq& \quad w _ { k } \\
\nonumber
  &\leq& \max \left\{ |h \rho ^ { T } \phi _ { k }| : h^2 \rho ^ { T } \Phi _ { k } \rho \leq \left( \epsilon ^ { \prime } \right) ^ { 2 } , \rho ^ { T } I \rho \leq ( 2 S ) ^ { 2 } \right\} \\
\nonumber
  &\leq & \max \left\{ |h \rho ^ { T } \phi _ { k }| :h^2 \rho ^ { T } V _ { k } \rho  \leq 2 \left( \epsilon ^ { \prime } \right) ^ { 2 } \right\} \\
  & \leq &\sqrt { 2 \left( \epsilon ^ { \prime } \right) ^ { 2 } } \left\| \phi _ { k } \right\| _ { V _ { k } ^ { - 1 } }.
\end{eqnarray}
Because of
\begin{eqnarray}
%\nonumber % Remove numbering (before each equation)
  h ^ { 2 } \rho ^ { T } \left(\Phi _ { k } + \lambda I \right)\rho _ { k } &\leq& \left( \epsilon ^ { \prime } \right) ^ { 2 } + \lambda h^2 (2S)^2 \\
%\nonumber
  \left| h \rho ^ { T } \phi _ { k } \right| &\leq & \left| h \right| \|\rho \|_{V_k} \| \phi _k \|_{V_k^{-1}}
\end{eqnarray}
Then, we get $\left\| \phi _ { k } \right\| _ { V _ { k } } ^ { 2 } \geq 1 / 2$.
\end{proof}

\textbf{Step 2:} If $w _ { i } \geq \epsilon ^ { \prime }$ for each $i < k$ then $\operatorname { det } V _ { k } \geq \lambda ^ { d } \left( \frac { 3 } { 2 } \right) ^ { k - 1 }$ and $\operatorname { det } V _ { k } \leq \left( \frac { \gamma ^ { 2 } ( k - 1 ) } { d } + \lambda \right) ^ { d }$.
\begin{proof}
We have $V _ { k } = V _ { k - 1 } + \phi _ { k } \phi _ { k } ^ { T }$, with Matrix Determinant Lemma $\operatorname { det } \left( \mathbf { A } + \mathbf { u v } ^ { \top } \right) = \left( 1 + \mathbf { v } ^ { \top } \mathbf { A } ^ { - 1 } \mathbf { u } \right) \operatorname { det } ( \mathbf { A } )$. Then, we have
\begin{eqnarray}
\nonumber % Remove numbering (before each equation)
  \operatorname { det } V _ { k } &=& \operatorname { det } V _ { k - 1 } \left( 1 + \phi _ { t } ^ { T } V _ { k-1    } ^ { - 1 } \phi _ { t } \right) \\
\nonumber
  &\geq& \operatorname { det } V _ { k - 1 } \left( \frac { 3 } { 2 } \right) \\
\nonumber
  &\geq& \ldots \\
\nonumber
  &\geq& \operatorname { det } [ \lambda I ] \left( \frac { 3 } { 2 } \right) ^ { k - 1 } \\
  &=& \lambda ^ { d } \left( \frac { 3 } { 2 } \right) ^ { k - 1 }.
\end{eqnarray}
In addition, we know that $\operatorname { det } V _ { k }$ is the product of the eigenvalues of $V_k$, whereas $trace[V_k]$ is the sum, and $\operatorname { det } V _ { k }$ is maximized when all eigenvalues are equal. That is
\begin{eqnarray}
\nonumber % Remove numbering (before each equation)
  \operatorname { det } V _ { k } &\leq& \left( \frac { \operatorname { trace } \left[ V _ { k } \right] } { d } \right) ^ { d } \\
%\nonumber
  &\leq &  \left( \frac { \gamma ^ { 2 } ( k - 1 ) } { d } + \lambda \right) ^ { d }.
\end{eqnarray}
\end{proof}

\textbf{Step 3:} Let $\alpha _ { 0 } = \left( \frac { \gamma ^ { 2 } } { \lambda } \right) = \left( \frac { 2 S h \gamma } { \epsilon ^ { \prime } } \right) ^ { 2 }$, from step 2 we get $\left( \frac { 3 } { 2 } \right) ^ { \frac { k - 1 } { d } } \leq \alpha _ { 0 } \left[ \frac { k - 1 } { d } \right] + 1 $. Then, we define
$
  B ( x , \alpha ) = \max \left\{ B : ( 1 + x ) ^ { B } \leq \alpha B + 1 \right\}.
$
The longest length of $\{s_t\}$, such that $w _ { k } > \epsilon ^ { \prime }, \forall k < k_{\max}$ is  $k_{\max} \leq d B \left( 1 / 2 , \alpha _ { 0 } \right) + 1$. As $\ln (1+x) \leq x, \forall x \geq 0$, we can get $B \ln \{ 1 + x \}  \leq \ln \{\alpha \} + \ln B$. Let $y =  \frac{x}{1+x} B$, with inequality $\frac{x}{1+x} \leq \ln (1+x)$ and $\ln x \leq \frac{x}{e}$, we have
\begin{eqnarray}
\nonumber % Remove numbering (before each equation)
  y &\leq& \ln \{  \alpha \} + \ln \frac { 1 + x } { x } + \ln y \\
  &\leq& \ln \{  \alpha \} + \ln \frac { 1 + x } { x } + \frac { y } { e }.
\end{eqnarray}
That is
\begin{eqnarray}
%\nonumber % Remove numbering (before each equation)
  y &\leq& \frac { e } { e - 1 } \left( \ln \{ \alpha \} + \ln \frac { 1 + x } { x } \right), \\
  B ( x , \alpha ) &\leq& \frac { 1 + x } { x } \frac { e } { e - 1 } \left( \ln \{ \alpha \} + \ln \frac { 1 + x } { x } \right)
\end{eqnarray}
Let $\alpha = \alpha_0$ and $x=\frac{1}{2}$, we can get
\begin{eqnarray}
%\nonumber % Remove numbering (before each equation)
  k_{\max} = \operatorname { dim } _ { E } ( \mathcal { F } , \epsilon )
  \leq 3 d \frac { e } { e - 1 } \ln \left\{ 3 \left( \frac { 2 S h \gamma } { \epsilon } \right) ^ { 2 } \right\} + 1.
\end{eqnarray}

\end{document}